\DeclareMathAlphabet{\mathcal}{OMS}{cmsy}{m}{n}
\renewcommand{\thesection}{\Roman{section}}
\renewcommand{\thesubsection}{\Alph{subsection}}
\theoremstyle{definition}
\theoremstyle{plain}
\newtheorem{theorem}{Theorem}
\newtheorem{lemma}{Lemma}
\newtheorem{axiom}{Axiom}
\newtheorem{proposition}{Proposition}
\newtheorem{corollary}{Corollary}
\newcommand{\pstar}{$\raisebox{0.2ex}{*}$}
\title{\fontsize{19pt}{18pt}\selectfont Reservation of Judgment and Robust Collective Decisions}
\author{Leo Kurata and Kensei Nakamura}
\date{This version: \today}
\begin{document}

\sloppy

\maketitle

\begingroup
\def\thefootnote{}
\footnotetext{The authors are grateful to Tsuyoshi Adachi, Noriaki Kiguchi, Kaname Miyagishima, Satoshi Nakada, Hendrik Rommeswinkel, Koichi Tadenuma, Norio Takeoka, Gerelt Tserenjigmid, and Shohei Yanagita for their helpful comments. 
This research is presented at Decision Theory Workshop (Hiroshima University of Economics) and the seminar at Tokyo University of Science. 
This research is financially supported by KAKENHI (Nos.~25KJ2146 \& 25KJ1298). }
\footnotetext{Kurata: Graduate School of Economics, Waseda University, Tokyo, Japan (e-mail: \protect\url{leo.kurata.ac@gmail.com}). Nakamura: Graduate School of Economics, Hitotsubashi University, Tokyo, Japan (e-mail: \protect\url{kensei.nakamura.econ@gmail.com}).}
\endgroup

\begin{abstract}
This paper studies preference aggregation under ambiguity when agents have incomplete preference relations due to imprecise beliefs. We introduce the ``dual'' of the Pareto principle, which respects unanimity among individuals, including those with unexpressed opinions. Our first theorem shows that, in most cases, this principle leads to a dictatorial rule in taste aggregation. We argue that this stems from the problem of spurious unanimity, even when the individuals have the same prior set. By weakening the above principle to avoid respecting spurious unanimity, the second theorem characterizes novel belief-aggregation rules, under which society does not discard any combination of plausible priors. 
\vspace{2mm}
\\
\textit{Keywords}: Preference aggregation, Ambiguity, Pareto principle, Spurious unanimity, (Im)possibility theorem \\
\textit{JEL}: D71, D81
\end{abstract}

\section{Introduction}
Ambiguity---defined as the absence of objective, precise information about the true probability distribution---is ubiquitous in economic problems. 
Agents in society with ambiguity form their own tastes and predictions and make decisions using them. 
Then, they do not necessarily have a single, precise prior (cf.~\citealp{ellsberg1961risk}) and rather, often think multiple probability laws to be plausible. 
In these situations, how should a social decision maker, or \textit{society}, predict the future state and evaluate each alternative, such as public policies, while respecting individual values? 
This question is practically important when people hold different interests and beliefs but still need to act collectively, as in the problem of climate change. 

This paper studies collective decision-making rules respecting unanimity within the framework of \citet{danan2016robust}, in which each agent's preference relation is \textit{incomplete} due to the multiplicity of prior distributions \`{a} la \citet{bewley2002knightian}. 
When comparing two prospects, the classical criterion known as \textit{the Pareto principle} requires that
\vspace{2mm}
\begin{quote}
\centering
\begin{minipage}{0.75\linewidth}
\raggedright
    if all individuals evaluate X to be weakly better than Y, \\
    then society should reach the same conclusion.
\end{minipage}
\end{quote}
\vspace{3mm}
This condition has been widely studied in economics, especially in the context of collective decision making (e.g., \citealp{Arrow1951-ARRIVA,harsanyi1955cardinal}). 
This principle requires that society should respect unanimity reached among individuals who are sufficiently confident to express their value judgments. 
However, in many real-world situations, such as elections, a considerable number of individuals refrain from expressing their judgment since they lack confidence in the comparisons of the alternatives. 
While the standard Pareto principle ignores agreements including unexpressed opinions, the reservation of judgment conveys meaningful information for making collective decisions. It signals that those individuals have some reasons for withholding judgment that one alternative is more desirable than another. 
Motivated by this, we study a dual of the standard Pareto principle, which we call \textbf{\textit{the Pareto{\pstar} principle}}, as follows: 
\vspace{2mm}
\begin{quote}
\centering
\begin{minipage}{0.75\linewidth}
\raggedright
    If all individuals evaluate X to be \textbf{not} weakly better than Y, \\
    then society should reach the same conclusion.
\end{minipage}
\end{quote}
\vspace{3mm}

This paper provides two aggregation theorems about the Pareto{\pstar} principle. 
First, we show that imposing the Pareto{\pstar} principle  results in a severe impossibility.
It has been known that under ambiguity, the Pareto principle is incompatible with respecting all individuals' tastes, but it can be satisfied by the utilitarian aggregation rules only when all individuals have a common prior distribution \citep{hylland1979impossibility,seidenfeld1989shared,mongin1995consistent,mongin1998paradox,chambers2006preference,mongin2015ranking,zuber2016harsanyi}.
Furthermore, \citet{danan2016robust} showed that when aggregating incomplete multiprior preferences based on the Pareto principle, society can take all individuals' tastes into account only if at least one probability distribution is shared by these individuals. 
In contrast, our first theorem claims that \textbf{even when all individuals have a common prior set, the Pareto{\pstar} principle is incompatible with respecting their tastes.}%
\footnote{Aggregation of imprecise beliefs has been considered in the literature. 
\citet{gajdos2008representation}, \citet{mongin2015ranking}, and \citet{zuber2016harsanyi} showed that it is difficult to avoid dictatorship under the Pareto principle when aggregating general utility functions. 
On the other hand, our first result shows that severe impossibility also holds in the aggregation of incomplete preference relations by considering the dual of the Pareto principle, which cannot be obtained from the standard one (cf.~\citealp{danan2016robust,nakamura2025impossible}).}

Impossibility theorems in the literature on preference aggregation under ambiguity are rooted in \textit{spurious unanimity}: If individuals have different tastes and beliefs, then the double disagreements may cancel each other out and produce a merely superficial consensus. We argue that our severe impossibility under the Pareto{\pstar} principle also stems from spurious unanimity, even when all individuals have a common belief set. Note that an individual concludes that ``X is \textbf{not} weakly better than Y'' if he or she has at least one prior under which Y is better than X. 
Therefore, in cases of conflicting tastes, individuals with common multiple priors can reach agreement on these evaluations by relying on different priors in the belief set. Respecting these superficial consensuses among individuals leads to the unacceptable consequences that some individuals' tastes must be ignored.

Second, we examine a weakening of the Pareto{\pstar} principle, as in \cite{danan2016robust}. 
They introduced \textit{the common-taste Pareto principle} that requires society to respect unanimity if there is no disagreement in tastes over relevant outcomes.
Our new condition, which we call \textbf{\textit{the common-taste Pareto{\pstar} principle}}, is the counterpart of their axiom.
\citet{danan2016robust} showed that under the common-taste Pareto principle, individual tastes are aggregated in a utilitarian way, while the social belief set must be included in the convex hull of the union of the individual belief sets. 
The latter part can be restated as each prior in the social belief set must be a weighted sum of some combination of individual beliefs. 
This is a natural extension of the weighted-average social beliefs characterized in the single-prior model \citep{gilboa2004utilitarian,billot2021utilitarian}.
In contrast, our second theorem shows that the common-taste Pareto{\pstar} principle has a dual, complementary implication for the belief-aggregation part: 
\textbf{For each combination of individual prior distributions, there must exist a social belief constructed as a weighted sum of the individual ones.} 
In other words, society is prohibited from discarding any combination of individual beliefs. 
The class of rules characterized in the second result is another natural extension of belief averaging.

Furthermore, our aggregation rules have desirable properties, which are not necessarily satisfied by the rules in \citet{danan2016robust}. Suppose that some priors are plausible for all individuals. Then, since the ``true'' probability law is more likely to be in the set of priors that all individuals think plausible, it would be natural to expect that the social belief set includes them. However, under the aggregation rules characterized in \citet{danan2016robust}, it is permissible for society to exclude (a part of) those priors. In contrast, our aggregation rules can ensure that society has a belief set including these shared priors. In addition, combined with the common-taste Pareto principle, whenever individuals have a common set, the social belief must coincide with the individual one.

In summary, this paper offers new insights into preference aggregation under ambiguity by incorporating reservations of judgment---an aspect often neglected in the existing literature. 
Our first theorem establishes a more severe impossibility result, emphasizing the importance of agreement on beliefs for respecting all individuals' preferences. 
Moreover, our second theorem proposes novel aggregation rules satisfying several desirable properties that are overlooked unless reservations of judgment are taken into account. 
This theorem provides more normatively appealing ways of constructing a social belief set within the class of rules compatible with weak versions of the standard Pareto principle, such as the common-taste Pareto principle.

This paper is organized as follows: 
Section \ref{sec:example} provides an illustrative example to explain the key ideas in our results. 
Section \ref{sec:model} introduces the formal model. Section \ref{sec:main} provides the two main theorems---the Pareto{\pstar} principle yields a severe impossibility, but the common-taste version characterizes new aggregation rules. 
Section \ref{sec:dis} offers another characterization of the aggregation rules obtained in our second theorem and discusses the applicability of our rules to the aggregation of ambiguity preferences studied in the literature.

\section{Illustrative Example}
\label{sec:example}

To illustrate that the Pareto{\pstar} principle yields unacceptable consequences and to reveal that this is caused by respecting spurious unanimity even under belief agreement, we consider the following simple example.
Suppose two politicians, Ann and Bob, deliberate on which is socially better between a status quo policy and a reform policy.  
For simplicity, we assume that the status quo policy yields unambiguous consequences, and both Ann and Bob will obtain utility 0 with certainty. 
On the other hand, the outcome of the reform policy is largely affected by social states, such as international affairs, that become apparent after the policy is implemented. 
There are two states, $s_A$ or $s_B$. 
Then, a prior distribution $p = (p_A, p_B)$ can be identified with its probability $p_A$ assigned to $s_A$. 
We first consider the case where both individuals have the same set $[0.2, 0.8]$ of plausible priors. 
According to \citeauthor{bewley2002knightian}'s (\citeyear{bewley2002knightian}) model, each agent concludes that one policy is weakly better than the other if the expected utility of the former is equal to or higher than that of the latter for all possible probabilities. 

\begin{figure}
    \centering
    \includegraphics[width=0.8\linewidth]{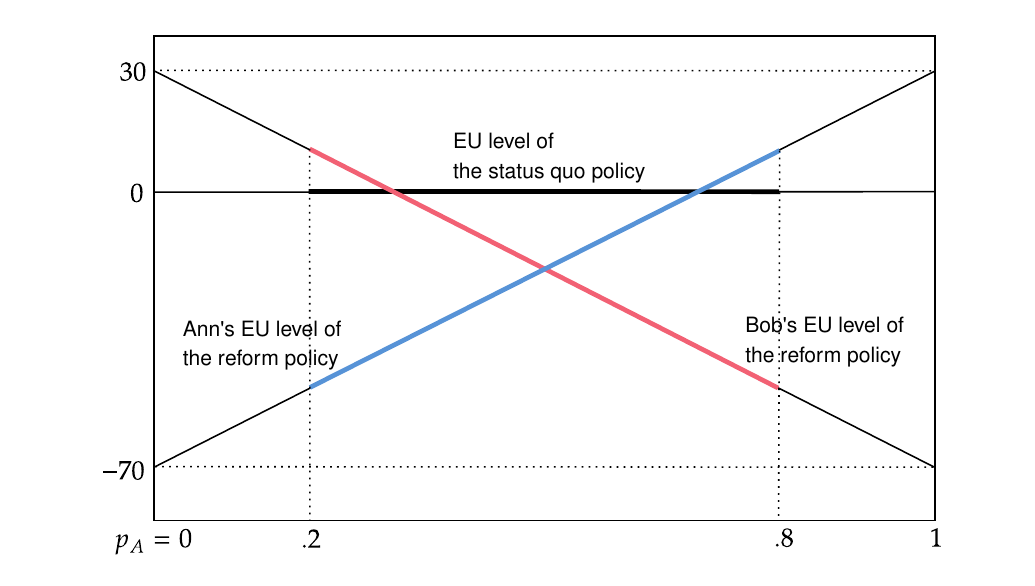}
    \caption{Spurious unanimity under a common belief set}
    \label{fig:example1}
\end{figure}

The politicians have conflicting political ideologies. 
Under the reform policy, Ann 
will obtain utility 30 in state $s_A$ and -70 in state $s_B$, while Bob will achieve utility -70 in state $s_A$ and 30 in state $s_B$. 
Figure \ref{fig:example1} plots both politicians' expected utility levels of the two policies for each plausible prior. 
The Pareto principle requires that if the expected utility level is greater than or equal to $0$ for every $p_A \in [0.2, 0.8]$ and for each politician, then the reform policy should be socially at least as good as the status quo policy. 
However, since this presumption does not hold, the Pareto principle does not offer any restriction on the collective preference relation. 
On the other hand, the Pareto{\pstar} principle works if the expected utility level is not always below $0$ on $[0.2, 0.8]$ for each politician.  
As illustrated in Figure \ref{fig:example1}, the condition is satisfied in this case, so the status quo is not weakly better than the reform policy at the social level. 

However, it is questionable whether such unanimity is really meaningful. 
Ann concludes that the status quo is not weakly better than the reform policy using priors that assign higher probabilities to state $s_A$ (e.g., probability $p_A = 0.8$), and Bob does so using priors that assign higher probabilities to state $s_B$ (e.g., probability $p_A = 0.2$). Thus, they reach a consensus through disagreements in both tastes and beliefs; that is, this agreement is just spurious unanimity, even though both politicians have a common belief set.

\begin{figure}
    \centering
    \includegraphics[width=0.8\linewidth]{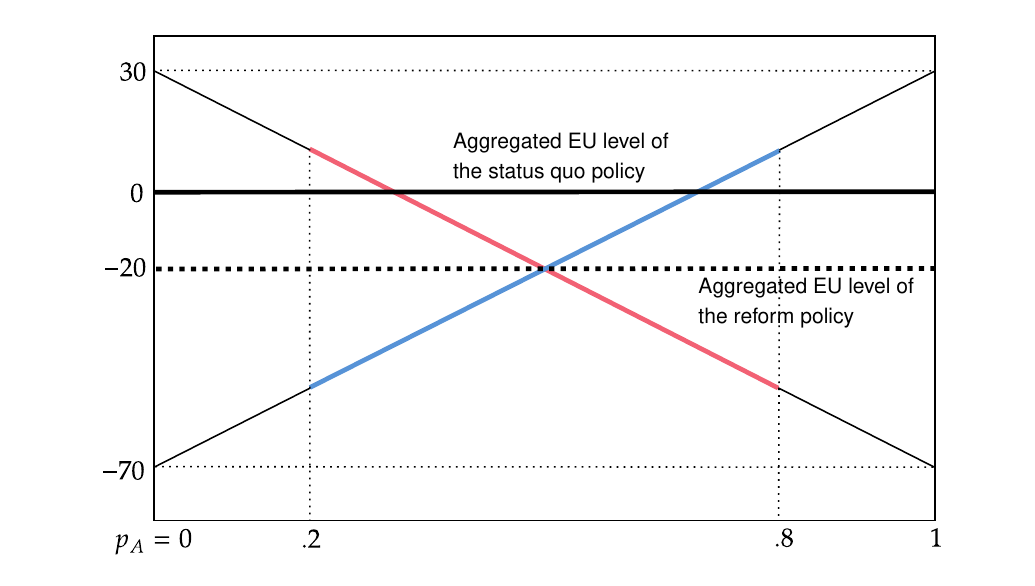}
    \caption{A counterexample}
    \label{fig:example1social}
\end{figure}

To see why the Pareto{\pstar} principle leads to an impossibility, consider the simplest taste-aggregation rule---equal-weight averaging. 
Figure \ref{fig:example1social} adds the graphs of expected aggregated utility levels to Figure \ref{fig:example1}. 
Regardless of the social prior set, the expected utility level of the status quo policy is always higher than that of the reform policy across the set. 
Therefore, the status quo policy should be better than the other. 
However, the Pareto{\pstar} principle gives the contrary evaluation. 
The impossibility result can be established because a similar argument holds for any taste-aggregation rule that is compatible with the Pareto{\pstar} principle.

Motivated by this argument, our second theorem restricts attention to agreements on pairs of acts for which there is no taste disagreement (i.e., it adopts the common-taste Pareto{\pstar} principle).
This restriction allows us to avoid the impossibility that arises from spurious unanimity. 
To see the intuition behind the second theorem, we then assume that both Ann and Bob will obtain utility 30 in state $s_A$ and -70 in state $s_B$ from the reform policy, but they have different beliefs.
Ann thinks state $s_A$ will be realized with probability ranging from 0.6 to 0.8, and Bob does so with probability ranging from 0.2 to 0.3 (see Figure \ref{fig:example2}). 
Then, since the graph of expected utility level of the reform policy is not above 0 for each politician, the common-taste Pareto{\pstar} principle requires that the reform policy should not be at least as good as the status quo policy for the society. 

\begin{figure}
    \centering
    \includegraphics[width=0.8\linewidth]{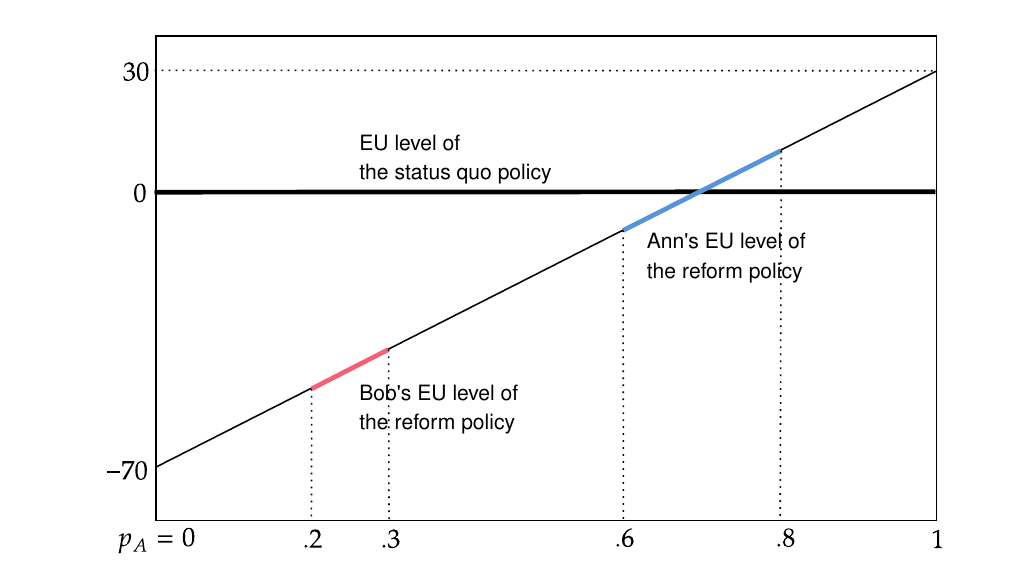}
    
    \caption{The common-taste Pareto{\pstar} principle and heterogeneous belief sets}
    \label{fig:example2}
\end{figure}

\citet{danan2016robust} showed that, under the common-taste Pareto principle, the collective belief set must be in the convex hull of individual belief sets, i.e., $[0.2, 0.8]$ in our example. 
Suppose that the collective belief is the singleton $\{0.8\}$. 
Then the reform policy becomes socially better than the status quo, so a contradiction to the common-taste Pareto{\pstar} principle arises. 
Therefore, the convex-hull approach cannot ensure satisfying the common-taste Pareto{\pstar} principle. 

Note that we can interpret the agreement respected by the common-taste Pareto{\pstar} principle as Ann and Bob each arriving at the same conclusion using different priors, for instance, priors that assign 0.6 and 0.3 to $s_A$, respectively. 
If there exists a prior $p_A$ between 0.3 and 0.6 in the collective belief set, then the society can also attain the conclusion that the reform policy should not be at least as desirable as the status quo policy. 
Our second theorem generalizes this observation to satisfy the common-taste Pareto{\pstar} principle in any case: 
Every combination of individual priors must be used to construct the social belief set. 
For two alternatives over which individuals have no taste disagreement, 
if they unanimously conclude that one is not better than the other, then each individual has a prior that supports the conclusion.
Since at least one social prior is constructed from this combination of priors under our aggregation rules, 
the social evaluation aligns with the agreement among individuals, that is,
the common-taste Pareto{\pstar} principle can be ensured. 
Our second theorem formally establishes this connection. 

\section{Model}
\label{sec:model}

We study the model examined in \citet{danan2016robust}. 
For any finite set $Y$, let $\Delta(Y)$ denote the set $\qty{ \alpha\in \mathbb{R}_+^Y \mid \sum_{y \in Y} \alpha (y) = 1}$. 
Let $S = \{ 1, 2, \ldots, m\}$ be a set of \textit{states} with $ m\geq 2$. 
Then, $\Delta (S)$ is the set of probability distributions over states. 
Let $X$ be a set of \textit{outcomes}. 
We assume that $X$ is a nonempty convex subset of a real vector space (e.g., the set of lotteries over finite deterministic prizes). 
An \textit{act} is a function from $S$ to $X$. 
The set of acts is denoted by $\mathcal{F}$. 
For $f,g\in \mathcal{F}$ and $\alpha \in [0,1]$, define $\alpha f + (1-\alpha)g \in\mathcal{F}$ as the act $h$ such that $h(s) = \alpha f(s) + (1-\alpha) g(s) $ for all $s\in S$. 
With an abuse of notation, 
we identify the outcome $x$ with the constant act $f$ such that $f(s) = x$ for all $s\in S$. 

Let $N = \{ 1,2, \ldots, n\}$ be a set of individuals with $n\geq 2$. 
The set $\Delta(N)$ represents the set of weights over the individuals. For $\gamma\in \Delta(N)$ and $i\in N$, we denote $\gamma (i)$ by $\gamma_i$ for notational simplicity. 
Society is indexed by $0$. 
Each agent $i \in N\cup\{ 0 \}$ has a preference relation $\succsim_i$ over $\mathcal{F}$. For all $f,g\in \mathcal{F}$, $f\succsim_i g$ means that agent $i$ thinks $f$ to be at least as good as $g$. 
The symmetric and asymmetric parts are denoted by $\sim_i$ and $\succ_i$, respectively. 

We assume that each agent $i \in N\cup \{ 0\}$ has a Bewley preference relation \citep{bewley2002knightian}; that is, there exists a pair $\qty(u_i, P_i)$ of a real-valued nonconstant affine function on $X$ and a nonempty compact convex subset of $\Delta(S)$ such that for all $f,g\in \mathcal{F}$, 
\begin{equation*}
    f\succsim_i g 
    \iff
    \qty[ ~ \sum_{s\in S} p(s) u_i(f(s)) \geq \sum_{s\in S} p(s) u_i(g(s))  ~~~\text{for all } p\in P_i~ ]. 
\end{equation*}
The function $u_i$ represents agent $i$'s tastes on outcomes, and the set $P_i$ is the set of probability distributions that agent $i$ considers plausible. 
According to this model, the agent thinks $f$ to be weakly better than $g$ if the expected utility of $f$ is not lower than $g$ for each plausible prior in $P_i$. 
Note that if the set $P_i$ is a singleton i.e., $P_i = \{p_i\}$ for some $p_i\in \Delta(S)$, then it becomes an \textit{SEU preference relation}: For all $f, g\in \mathcal{F}$, $f\succsim_i g$ if and only if $\sum_{s\in S} p_i(s) u_i(f(s)) \geq \sum_{s\in S} p_i(s) u_i(g(s))$. 

We introduce two assumptions for preference profiles. 
We say that the profile $\qty(\succsim_i)_{i\in N}$ satisfies \textit{c-minimal agreement} if there exist $x^*, x_*\in X$ such that for all $i\in N$, $x^* \succ_i x_*$. 
This only requires that there exists at least one pair of outcomes where all individuals agree upon the strict ranking.
Also, we say that $\qty(\succsim_i)_{i\in N}$ satisfies \textit{c-diversity} if for each $i\in N$, there exist $x^{i *}, x_{i *} \in X$ such that $x^{i *}\succ_i x_{i *}$ and $x^{i *}\sim_j x_{i *}$ for all $j\in N\setminus\{ i\}$. 
It is known that c-diversity is equivalent to the condition that individuals' tastes $u_1, u_2, \ldots, u_n$ are linearly independent \citep{weymark1993harsanyi}. 
Note that the latter condition is stronger than the former. 

\section{Preference Aggregation}
\label{sec:main}

\subsection{Pareto{\pstar}}

The following condition is a standard axiom for aggregation rules. This requires that if all individuals weakly prefer one act to another, then society should do so. 

\begin{axiom}[Pareto]
    For all $f,g\in\mathcal{F}$, if $f \succsim_i g$ for all $i\in N$, then $f \succsim_0 g$. 
\end{axiom}

We introduce the ``dual'' of the above. The following axiom requires that if no individual thinks one act $f$ to be weakly better than another $g$, then society should conclude that $f$ is not weakly better than $g$. 

\begin{axiom}[Pareto{\pstar}]
    For all $f,g\in\mathcal{F}$, if $f \not\succsim_i g$ for all $i\in N$, then $f \not\succsim_0 g$. 
\end{axiom}

Note that both principles just require society to respect unanimity.
In the standard setup where all agents have SEU preference relations, since Pareto{\pstar} is equivalent to the Paretian condition for strict preference relations, these two principles lead to almost the same consequences. 
On the other hand, when agents may have incomplete preference relations, an important difference arises. 
In such a situation, they may refrain from expressing their suggestions due to a lack of confidence. 
Although Pareto ignores agreements including unexpressed opinions, society should not ignore their silence. 
The relation $f\not\succsim_i g$ suggests that the agent has some reason preventing them from concluding that $f\succsim_i g$, which is also meaningful information for making collective decisions. 
Pareto{\pstar} is a minimal requirement ensuring that the social preference relation respects these unexpressed opinions as well.

Theorem 1 of \citet{danan2016robust} shows that under c-diversity, Pareto is equivalent to the condition that there exists $\qty(\alpha, \beta) \in \qty( \mathbb{R}_+^n \setminus \{ \mathbf{0} \} ) \times \mathbb{R}$ such that 
\begin{equation}
    u_0 = \sum_{i\in N} \alpha_i u_i + \beta 
    ~~~\text{and} ~~~
    P_0\subseteq \bigcap_{i\in N \,:\, \alpha_i > 0} P_i. 
\end{equation} 
That is, under Pareto, tastes must be aggregated via a weighted utilitarian rule, and the social belief set must be included in the intersection of belief sets of the individuals with positive weights. 
If the intersection of all individual belief sets is empty, then society has to ignore some individuals by assigning weight $0$ to them, which is unacceptable from a normative perspective.
However, the negative consequence does not occur if the individuals have some common prior. 
It has been known in the literature that when all agents have SEU preference relations, Pareto implies that society can assign positive weights to all the individuals only when they have a common belief.
In contrast, the above theorem suggests that individuals do not need to have exactly the same prior set: 
Society can assign positive weights to all individuals as long as there exists at least one probability distribution plausible for all of them. 

We then characterize the implication of the dual of Pareto---Pareto{\pstar}---under c-diversity. 
The following result shows that a severe impossibility arises in the taste aggregation even when individuals have a common set of priors. 

\begin{theorem}
\label{thm:stand}
    Suppose that for each $i\in N\cup \{ 0 \}$, $\succsim_i$ is a Bewley preference relation associated with $\qty(u_i, P_i)$ and $\qty(\succsim_i)_{i\in N}$ satisfies c-diversity. Then the following two statements are equivalent: 
     \begin{enumerate}
        \item[(i)] Pareto{\pstar} holds; 
        \item[(ii)] there exists $\qty(\alpha, \beta) \in \qty( \mathbb{R}_+^n \setminus \{ \mathbf{0} \} ) \times \mathbb{R}$ such that 
        \begin{equation}
            u_0 = \sum_{i\in N} \alpha_i u_i + \beta,
        \end{equation}
        and either 
        (a) 
        there exist a nonempty set $M\subseteq N$ and $p\in \Delta(S)$ such that $P_i = \{p\}$ for all $i\in M$, $\alpha_j = 0$ for all $j\in N\setminus M$, and $p\in P_0$,
        or 
        (b) there exists $i^* \in N$ such that $\alpha_{j} = 0$ for all $j\in N\setminus \{i^* \}$ and $P_0\supseteq P_{i^*}$. 
    \end{enumerate}
\end{theorem}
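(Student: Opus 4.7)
The plan is to handle both directions of the equivalence, with the heavy lifting in the necessity direction (i)$\Rightarrow$(ii). Sufficiency (ii)$\Rightarrow$(i) is routine: in case (b), any $p^* \in P_{i^*} \subseteq P_0$ witnessing $f \not\succsim_{i^*} g$ also witnesses $f \not\succsim_0 g$ because $u_0 = \alpha_{i^*} u_{i^*} + \beta$; in case (a), the common singleton prior $p \in P_0$ witnesses $f \not\succsim_i g$ for every $i \in M$, and summing the resulting strict inequalities with weights $\alpha_i$ produces the corresponding witness at $p$ for $\succsim_0$.

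For necessity, the first step is to restrict Pareto{\pstar} to constant acts, where each $\succsim_i$ is complete. Pareto{\pstar} then becomes the strict-Pareto condition on $X$, namely that $u_i(y) > u_i(x)$ for all $i \in N$ implies $u_0(y) > u_0(x)$. Together with c-diversity (equivalently, $\{1, u_1, \ldots, u_n\}$ linearly independent), the standard Harsanyi--Weymark aggregation argument yields $u_0 = \sum_i \alpha_i u_i + \beta$ with $\alpha \in \mathbb{R}_+^n \setminus \{\mathbf{0}\}$. Set $M = \{i : \alpha_i > 0\}$. It remains to locate the social belief set, and for this I would reformulate Pareto{\pstar} in terms of the state-wise utility differences $d_i(s) = u_i(f(s)) - u_i(g(s))$: it becomes the implication \emph{if $\sum_{i \in M} \alpha_i d_i \geq 0$ pointwise on $P_0$, then $d_k \geq 0$ pointwise on $P_k$ for some $k \in N$}. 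By c-diversity, the image of $(u_1, \ldots, u_n)\colon X \to \mathbb{R}^n$ is full-dimensional, so the family $(d_k(s))_{k,s}$ can be independently prescribed by suitable acts in any small neighborhood of the origin; positive homogeneity of all the relevant conditions then lets me treat the $d_k$'s as arbitrary elements of $\mathbb{R}^S$.

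When $|M| = 1$, say $M = \{i^*\}$, I would suppose for contradiction some $p^* \in P_{i^*} \setminus P_0$. Strict separation (normalized against $\mathbf{1}$, since $P_0$ and $\{p^*\}$ both lie in the affine simplex) yields $d \in \mathbb{R}^S$ with $\langle p^*, d \rangle < 0$ yet $\langle q, d \rangle > 0$ for all $q \in P_0$; setting $d_{i^*} = d$ and $d_k = -\mathbf{1}$ for $k \neq i^*$ (vacuous on the left side since $\alpha_k = 0$) violates the above implication, forcing $P_0 \supseteq P_{i^*}$ and hence case (b). When $|M| \geq 2$, I would show pairwise that, for each $i, j \in M$, $P_i = P_j$ equals a common singleton $\{p_{ij}\} \subseteq P_0$; transitivity across pairs then yields case (a). Fixing $i, j \in M$, I set $d_k = -\mathbf{1}$ for $k \notin \{i, j\}$ and substitute $d_j = -(\alpha_i/\alpha_j) d_i + \eta \mathbf{1}$; the left-hand side of the Pareto{\pstar}-implication collapses to a constant in $s$, and taking $\eta$ just at the induced threshold and $\lambda \to \infty$ in $d_i \to \lambda d_i$ distills the condition into \emph{for every $d \in \mathbb{R}^S$, either $d \geq 0$ on $P_i$ or $d \leq 0$ on $P_j$}. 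A standard strict-separation argument then forces $P_i = P_j =: \{p\}$, and a further substitution $d_j = -\epsilon \mathbf{1}$ with $\epsilon \to 0^+$, combined with bipolar duality for $P_0$, yields $p \in P_0$.

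The main obstacle, I expect, is the bookkeeping needed to realize the prescribed state-contingent $d_k$'s simultaneously by genuine acts via c-diversity, and to justify the successive limits ($\lambda \to \infty$, $\eta \to 0^+$, $\epsilon \to 0^+$) that convert the shifted Pareto{\pstar} condition into the clean, scale-free forms on which the separation and bipolar arguments operate.
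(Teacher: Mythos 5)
Your proposal is correct and follows essentially the same route as the paper's proof: the utilitarian form via strict Pareto on constant acts (de Meyer--Mongin/Harsanyi), a separating-hyperplane argument realized as acts through c-diversity to force $P_0 \supseteq P_{i^*}$, and the weighted cancellation $\alpha_i d_i + \alpha_j d_j = \text{(negative constant)}$ to manufacture a spuriously unanimous pair that society must strictly rank, which is exactly the paper's construction of $\varphi_1, \varphi_2$ and $(\hat{x}^\nu, \hat{f}^\nu)$. The only differences are organizational---you package the $|M|\geq 2$ step as a clean pairwise dichotomy (``$d\geq 0$ on $P_i$ or $d\leq 0$ on $P_j$'') obtained via limits in $\lambda$ and $\eta$, where the paper splits into the non-singleton and distinct-singleton subcases with explicit small parameters $\varepsilon$ and $\nu$---and these are equivalent in substance.
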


The first part of statement (ii) states that tastes should be aggregated by a weighted utilitarian rule, as in \citeauthor{harsanyi1955cardinal}'s (\citeyear{harsanyi1955cardinal}) aggregation theorem. 
The more important implication lies in the latter part. 
We classify the cases according to whether society respects only individuals with a single prior (case (a)) or not (case (b)). 
In case (a), society can assign positive weights to only individuals who have some single prior $p$. 
Therefore, it cannot respect all individuals' tastes unless they have a common single prior. 
This is similar to the impossibility in the preference aggregation of SEU individuals (e.g., \citealp{hylland1979impossibility}; \citealp{mongin1995consistent}). 
The only difference from the results in the SEU-agent model is that society can care about the beliefs of individuals who are ignored in the taste aggregation because any social belief set is allowed as long as it includes the prior of the individuals with positive weights. 
On the other hand, case (b) covers the situations where society respects individuals with non-unique priors. In this case, \textbf{there must exist a dictator $i^*$ in the taste aggregation, even if society wants to respect individuals with common multiple priors}. As in the previous case, society can take an arbitrary belief set that is large enough. Thus, this theorem suggests that respecting all individuals' tastes and beliefs is much more difficult compared to Theorem 1 of \citet{danan2016robust}---it is impossible to escape from ignoring some individuals in the taste aggregation, even if all individuals have a common belief set.

When we combine the above theorem with Theorem 1 of \citet{danan2016robust}, the following corollary can be obtained. 

\begin{corollary}
    Suppose that for each $i\in N\cup \{ 0 \}$, $\succsim_i$ is a Bewley preference relation associated with $\qty(u_i, P_i)$ and $\qty(\succsim_i)_{i\in N}$ satisfies c-diversity. Then the following two statements are equivalent: 
    \begin{enumerate}
        \item[(i)] Pareto and Pareto{\pstar} hold; 
        \item[(ii)] there exists $(\alpha, \beta) \in  \qty( \mathbb{R}_+^n \setminus \{ \mathbf{0} \} ) \times \mathbb{R}$ such that 
        \begin{equation}
            u_0 = \sum_{i\in N} \alpha_i u_i + \beta,
        \end{equation}
        and either 
        (a) there exist a nonempty set $M\subseteq N$ and $p\in \Delta(S)$ such that $P_i = \{p\}$ for all $i\in M$, $\alpha_j = 0$ for all $j\in N\setminus M$, and $P_0 =\{p\}$ or, 
        (b) there exists $i^* \in N$ such that $\alpha_{j} = 0$ for all $j\in N\setminus \{i^* \}$ and $P_0 =  P_{i^*}$. 
    \end{enumerate}
\end{corollary}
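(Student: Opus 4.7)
The plan is to obtain the Corollary directly by intersecting the conclusions of Theorem~1 of \citet{danan2016robust} and Theorem~\ref{thm:stand} of the present paper, both of which are applicable under c-diversity. The two theorems produce the same weighted utilitarian aggregation for tastes (uniqueness being guaranteed by the linear independence of $u_1,\dots,u_n$, which is equivalent to c-diversity by \citet{weymark1993harsanyi}), and they impose complementary set inclusions on $P_0$ that combine into equalities.

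For the (i)$\Rightarrow$(ii) direction, I would first apply Theorem~1 of \citet{danan2016robust} to the Pareto axiom to obtain some $(\alpha',\beta')\in(\mathbb{R}_+^n\setminus\{\mathbf{0}\})\times\mathbb{R}$ with $u_0=\sum_{i\in N}\alpha'_iu_i+\beta'$ and $P_0\subseteq\bigcap_{i\,:\,\alpha'_i>0}P_i$. Next, I would apply Theorem~\ref{thm:stand} to Pareto{\pstar}, obtaining another pair $(\alpha,\beta)$ with the same form of affine decomposition plus either case~(a) or case~(b). Because the $u_i$'s are linearly independent, the representation $u_0=\sum\alpha_iu_i+\beta$ is unique, so $(\alpha',\beta')=(\alpha,\beta)$ and the index sets with positive weights coincide. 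Then case~(a) of Theorem~\ref{thm:stand} gives $\bigcap_{i\,:\,\alpha_i>0}P_i=\bigcap_{i\in M}P_i=\{p\}$, so the Pareto inclusion $P_0\subseteq\{p\}$ together with the Pareto{\pstar} condition $p\in P_0$ forces $P_0=\{p\}$. Case~(b) yields $\bigcap_{i\,:\,\alpha_i>0}P_i=P_{i^*}$, so $P_0\subseteq P_{i^*}$ from Pareto and $P_0\supseteq P_{i^*}$ from Pareto{\pstar} together give $P_0=P_{i^*}$.

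The (ii)$\Rightarrow$(i) direction is a straightforward verification: in case~(a), the hypotheses give $u_0=\sum\alpha_iu_i+\beta$ with $P_0=\{p\}\subseteq\bigcap_{i\,:\,\alpha_i>0}P_i=\bigcap_{i\in M}\{p\}=\{p\}$, which satisfies the condition of Theorem~1 of \citet{danan2016robust} and hence implies Pareto; moreover the condition $p\in P_0$ is satisfied, so Theorem~\ref{thm:stand} yields Pareto{\pstar}. Case~(b) is handled analogously with $P_0=P_{i^*}$.

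I do not foresee a genuine obstacle here: the whole argument is a matter of matching the hypotheses of two theorems that are already stated in the paper. The only point requiring a brief justification is that a single tuple $(\alpha,\beta)$ serves both theorems, which I would dispatch in one line by invoking the linear independence of $u_1,\dots,u_n$ under c-diversity. A minor consistency check---that cases~(a) and~(b) of the Corollary are mutually compatible whenever $|M|=1$ and $P_{i^*}=\{p\}$---is automatic since both then assert $P_0=\{p\}$.
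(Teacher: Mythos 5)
Your proposal is correct and takes essentially the same route the paper intends: the corollary is stated as an immediate combination of Theorem~1 of \citet{danan2016robust} with Theorem~\ref{thm:stand}, and your argument---pinning down a single $(\alpha,\beta)$ via the linear independence of $u_1,\dots,u_n$ under c-diversity and then intersecting $P_0\subseteq\bigcap_{i:\alpha_i>0}P_i$ with the lower bounds $p\in P_0$ (case (a)) and $P_0\supseteq P_{i^*}$ (case (b)) to obtain the equalities---is exactly that combination, including the correct converse verification.
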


This corollary states that under Pareto and Pareto{\pstar}, society can respect all individuals' preferences only if they have a common single prior. 
If they have a common but multiple priors, then there must be a dictator $i^\ast$ in both tastes and beliefs (case (b)). 
Thus, imposing the two Paretian conditions results in a more stringent impossibility than solely requiring Pareto or Pareto{\pstar}.  

We then explain the intuition behind Theorem \ref{thm:stand}. 
The utilitarian aggregation part directly follows from \citeauthor{harsanyi1955cardinal}'s (\citeyear{harsanyi1955cardinal}) aggregation theorem (cf.~\citealp{de1995note}). 
It should be noted that Pareto{\pstar} is equivalent to the condition that for all $f,g\in \mathcal{F}$, $f\succsim_0 g$ implies $f\succsim_i g$ for some $i\in N$. 
Since we can construct a pair of acts relevant only for an arbitrary individual by c-diversity, we can show that for each individual $i$ with a positive weight, society should be more indecisive than individual $i$. 
This implies that the social belief set $P_0$ is larger than the individual set $P_i$ (see Lemma \ref{lem:large}). 
This is why the social belief set must be large enough under Pareto{\pstar}, while Pareto has the opposite effect as shown in \citet{danan2016robust}. 

The most important observation is that in almost all preference profiles, society cannot assign positive weights to more than one individual. 
This part can be shown by generalizing the argument in Section \ref{sec:example}, which shows that the equal-weight averaging rule violates Pareto{\pstar}.
That is, respecting agreements achieved by using different tastes and priors leads to a negative consequence. 
Since our proof of Theorem \ref{thm:stand} explicitly constructs a contradictory pair of acts under which spurious unanimity holds,
this illuminates the logical connection between spurious unanimity and the difficulty in respecting all individuals.

\subsection{Common-Taste Pareto{\pstar}}
\label{sub:common}

We introduce two conditions that can avoid the problem of respecting spurious unanimity. 
\citet{danan2016robust} examined a weak Pareto principle that focuses on pairs of acts where all individuals have common tastes on relevant outcomes.
For a set $A$, let $\mathrm{conv}\,A$ denote the convex hull of $A$. 
We say that two acts $f$ and $g$ \textit{have no taste disagreement} if for all $x,y\in \mathrm{conv}\, \qty(f(S)\cup g(S))$ and all $i,j\in N$, $x\succsim_i y$ implies $x\succsim_j y$. 
The following restricts Pareto to these pairs of acts. 

\begin{axiom}[Common-Taste Pareto]
    For all $f,g\in\mathcal{F}$ that have no taste disagreement, if $f \succsim_i g$ for all $i\in N$, then $f \succsim_0 g$. 
\end{axiom}

Again, we consider the ``dual'' of the above axiom. 
The next requires that for all two $f$ and $g$ that have no taste disagreement, if every individual does not weakly prefer $f$ to $g$, then society should not do so. 

\begin{axiom}[Common-Taste Pareto{\pstar}]
    For all $f,g\in\mathcal{F}$ that have no taste disagreement, if $f \not\succsim_i g$ for all $i\in N$, then $f \not\succsim_0 g$. 
\end{axiom}

Theorem 2 of \citet{danan2016robust} shows that under c-minimal agreement, Common-Taste Pareto holds if and only if there exists $\qty(\alpha, \beta) \in \qty( \mathbb{R}_+^n \setminus \{ \mathbf{0} \} ) \times \mathbb{R}$ such that 
\begin{equation}
\label{eq:DGHT_thm2}
    u_0 = \sum_{i\in N} \alpha_i u_i + \beta ~~~\text{and} ~~~P_0 \subseteq \mathrm{conv}\,\qty( \bigcup_{i\in N} P_i  ). 
\end{equation}
Note that the latter condition of \eqref{eq:DGHT_thm2} can be rewritten as for each $p_0 \in P_0$, there exists $(p_1,p_2, \ldots, p_n)\in \Pi_{i\in N} P_i$ such that for some $\gamma \in \Delta(N)$, 
\begin{equation}
    p_0 = \sum_{i\in N} \gamma_i p_i. 
\end{equation}
Thus, according to their theorem, tastes are aggregated in a utilitarian way, and each social belief can be written 
as a weighted sum of some combination of beliefs plausible for each individual.%
\footnote{
\citet{danan2013aggregating,danan2015harsanyi} studied aggregation of risk preferences in the situation where individuals and society have expected multi-utility preferences \citep{dubra2004expected}. 
These preferences are parametrized by a set of utility functions, and compare lotteries as in the Bewley model. 
\citet{danan2015harsanyi} showed that the Pareto principle has a similar implication in their setup. 
}

As in the previous theorem, we examine the implications of Common-Taste Pareto{\pstar}. 
The next theorem shows that this axiom can be characterized by a new class of belief aggregation rules, which can be seen as the dual of \eqref{eq:DGHT_thm2}.  

\begin{theorem}
\label{thm:common}
    Suppose that for each $i\in N\cup \{ 0 \}$, $\succsim_i$ is a Bewley preference relation associated with $\qty(u_i, P_i)$ and $\qty(\succsim_i)_{i\in N}$ satisfies c-minimal agreement. Then the following two statements are equivalent: 
    \begin{enumerate}
        \item[(i)] Common-Taste Pareto{\pstar} holds; 
        \item[(ii)] there exists $\qty(\alpha, \beta) \in \qty(\mathbb{R}_+^n \setminus \{ \mathbf{0} \}) \times \mathbb{R}$ such that 
        \begin{equation}
            u_0 = \sum_{i\in N} \alpha_i u_i + \beta,
        \end{equation}
        and for all $(p_1,  p_2,\ldots, p_n) \in \Pi_{i\in N} P_i$, there exists $p_0 \in P_0$ such that for some $\gamma\in \Delta(N)$, 
        \begin{equation}
            p_0 = \sum_{i\in N} \gamma_i p_i. 
        \end{equation}
    \end{enumerate}
\end{theorem}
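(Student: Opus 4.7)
The plan is to verify the two directions separately. For $(\text{ii})\Rightarrow(\text{i})$, which is essentially a direct calculation, I would fix $f,g\in\mathcal{F}$ with no taste disagreement satisfying $f\not\succsim_i g$ for every $i\in N$. Ordinal equivalence of the $u_i$'s on $C=\mathrm{conv}(f(S)\cup g(S))$ combined with their affinity yields $a_i>0$ and $b_i\in\mathbb{R}$ such that $u_i=a_i u_1+b_i$ on $C$. Set $h(s)=u_1(f(s))-u_1(g(s))$; for each $i$ pick $p_i\in P_i$ with $\sum_s p_i(s)h(s)<0$, and apply (ii) to obtain $p_0=\sum_{i\in N}\gamma_i p_i\in P_0$ for some $\gamma\in\Delta(N)$. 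Hence $\sum_s p_0(s)h(s)<0$, and combining with $u_0=\sum_{i\in N}\alpha_i u_i+\beta$ gives $\sum_s p_0(s)(u_0(f(s))-u_0(g(s)))<0$, i.e., $f\not\succsim_0 g$.

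For $(\text{i})\Rightarrow(\text{ii})$, I would derive the utilitarian form first. Restricting Common-Taste Pareto{\pstar} to constant acts $x,y\in X$: the relation $x\not\succsim_i y$ reduces to $u_i(x)<u_i(y)$, and whenever $u_i(x)<u_i(y)$ for all $i$ the no-taste-disagreement condition on $\mathrm{conv}\{x,y\}$ is automatic. Common-Taste Pareto{\pstar} therefore yields the strict Pareto condition $u_i(x)<u_i(y)\,\forall i\Rightarrow u_0(x)<u_0(y)$. The strictly positive direction $(u_1(x^*)-u_1(x_*),\ldots,u_n(x^*)-u_n(x_*))$ supplied by c-minimal agreement allows any pair satisfying weak inequalities to be perturbed into one satisfying strict inequalities, so a standard Harsanyi/Farkas-type argument (cf.~\citealp{de1995note}) delivers $u_0=\sum_{i\in N}\alpha_i u_i+\beta$ with $\alpha\in\mathbb{R}_+^n\setminus\{\mathbf{0}\}$.

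For the belief-aggregation part, I argue by contradiction. Fix $(p_1,\ldots,p_n)\in\Pi_{i\in N}P_i$ and suppose $\mathrm{conv}\{p_1,\ldots,p_n\}\cap P_0=\emptyset$. Both sets are nonempty, convex, and compact in $\mathbb{R}^S$, so strict separation produces $h\colon S\to\mathbb{R}$ with $\sum_s p_0(s)h(s)>0$ for all $p_0\in P_0$ and $\sum_s p_i(s)h(s)<0$ for every $i$ (after the harmless shift $h\mapsto h-c\mathbf{1}$, justified because probabilities sum to one). Using the outcomes $x_*,x^*$ from c-minimal agreement, pick $\epsilon>0$ small enough that $\tfrac12\pm\epsilon h(s)\in[0,1]$ for every $s$, and define
\begin{equation*}
f(s)=\bigl(\tfrac12+\epsilon h(s)\bigr)x^*+\bigl(\tfrac12-\epsilon h(s)\bigr)x_*,\qquad g(s)=\tfrac12 x^*+\tfrac12 x_*.
\end{equation*}
Both acts take values in the segment $[x_*,x^*]$, on which every $u_i$ is a positive affine function of the segment parameter, so $f,g$ have no taste disagreement; a direct computation gives $u_i(f(s))-u_i(g(s))=2\epsilon h(s)(u_i(x^*)-u_i(x_*))$ with $u_i(x^*)-u_i(x_*)>0$, whence $\sum_s p_i(s)(u_i(f(s))-u_i(g(s)))<0$ and $f\not\succsim_i g$ for every $i$. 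Common-Taste Pareto{\pstar} then forces $f\not\succsim_0 g$; since the utilitarian form gives $u_0(x^*)-u_0(x_*)=\sum_{i\in N}\alpha_i(u_i(x^*)-u_i(x_*))>0$, this in turn yields some $p_0\in P_0$ with $\sum_s p_0(s)h(s)<0$, contradicting the separation.

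The main obstacle is the construction in the contradiction argument: translating a separating functional $h$ on $\Delta(S)$ into a pair of acts that simultaneously (a) lies in a common-taste region and (b) transmits the sign of $h$ consistently through every $u_i$ and through $u_0$. Confining the acts to the segment $[x_*,x^*]$ supplied by c-minimal agreement handles (a), while the utilitarian form of $u_0$ established beforehand handles (b); additional care is needed in aligning the strict separation with the constant shift so that the ``$>0$'' inequality on $P_0$ and the ``$<0$'' inequality on each $p_i$ hold simultaneously under the same normalization.
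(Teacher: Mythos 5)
Your proposal is correct and follows essentially the same route as the paper: the utilitarian form via restriction to constant acts and the affine aggregation result of de~Meyer--Mongin, the belief condition via strict separation of $\mathrm{conv}\{p_1,\ldots,p_n\}$ from $P_0$ followed by a pair of acts valued in the segment $\mathrm{conv}\{x_*,x^*\}$ (the paper encodes the separating functional as $u_i(f(s))=\lambda(s)$ against a constant act with utility $\kappa$, while you absorb $\kappa$ into a shift of $h$; these are the same construction), and the converse via a common cardinal representation on the no-taste-disagreement region. The only blemish is the immaterial factor of $2$ in $u_i(f(s))-u_i(g(s))$, which should be $\epsilon h(s)\,(u_i(x^*)-u_i(x_*))$; only the sign is used.
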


Our aggregation rule states that each combination $(p_1,  p_2,\ldots, p_n)$ of priors in $\Pi_{i\in N} P_i$ must be used in the social evaluation.
Thus, Common-Taste Pareto{\pstar} rules out disposing of any combination of plausible priors. 
This implication is complementary to that of Common-Taste Pareto: 
While Common-Taste Pareto requires that all social beliefs should be constructed from individual ones, 
Common-Taste Pareto{\pstar} claims that \textbf{all combinations of individual beliefs must constitute some social belief.} 
In other words, while the corresponding theorem in \citet{danan2016robust} determines an ``upper bound'' of the social belief set, our theorem offers a ``lower bound'' of the social belief set because it must be large enough under the aggregation rules in Theorem \ref{thm:common}. 
Combined with the theorem of \citet{danan2016robust}, our result suggests further normatively appealing ways of constructing a social belief set within the class of aggregation rules characterized in \citet{danan2016robust}. 

Note that several papers, such as \citet{cres2011aggregation}, \citet{hill2012unanimity}, \citet{alon2016utilitarian}, \citet{hayashi2023collective}, and \citet{dong2024aggregation}, also studied preference aggregation within frameworks where agents have imprecise beliefs. To avoid respecting spurious unanimity, these papers imposed weak Paretian requirements or focused only on cases without taste disagreements. Although some of the aggregation rules characterized in these papers are included in our class of rules, the same rule has not been obtained in the literature. Compared with these papers, Theorem \ref{thm:common} provides a foundation for a broader class of more fundamental aggregation rules.

One important observation from Theorem \ref{thm:common} is that under Common-Taste Pareto{\pstar}, the probability distribution 
plausible for all the individuals must be in the social prior set. To see this, let $p\in \bigcap_{i\in N}P_i$. Then,
by Theorem \ref{thm:common}, there exist $p_0 \in P_0$ and weight $\gamma\in \Delta(N)$ such that $p_0 = \sum_{i\in N}\gamma_i p = p$, which implies $\bigcap_{i\in N}P_i \subseteq P_0$. 
Furthermore, using Theorem 2 of \citet{danan2016robust}, we can see that under the two common-taste Paretian axioms, if all individuals have a common belief set $P$, then the social belief set should coincide with $P$. 
These natural properties cannot be ensured only by Common-Taste Pareto. 
That is, Common-Taste Pareto{\pstar} captures the important but overlooked aspect of collective decisions. 
We provide these implications as formal statements. 

\begin{corollary}
    Suppose that for each $i\in N\cup \{ 0 \}$, $\succsim_i$ is a Bewley preference relation associated with $\qty(u_i, P_i)$ and $\qty(\succsim_i)_{i\in N}$ satisfies c-minimal agreement. Then the following two statements hold:
    \begin{enumerate}
        \item[(a)] Common-Taste Pareto{\pstar} implies $\bigcap_{i\in N} P_i \subseteq P_0$; 
        \item[(b)] Common-Taste Pareto and Common-Taste Pareto{\pstar} imply that if there is $P \subseteq \Delta(S)$ such that $P_i = P$ for all $i\in N$, then $P_0 =P$. 
    \end{enumerate}
\end{corollary}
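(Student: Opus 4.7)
The plan is to derive both parts as essentially immediate consequences of Theorem \ref{thm:common} together with Theorem 2 of \citet{danan2016robust}. Since Bewley belief sets are by assumption nonempty compact convex subsets of $\Delta(S)$, nothing deeper than set-theoretic manipulation of the belief-aggregation formulas should be required.

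For part (a), I would start from an arbitrary $p \in \bigcap_{i \in N} P_i$ and apply Theorem \ref{thm:common} (which is available since c-minimal agreement is assumed and Common-Taste Pareto\pstar{} holds). Taking the constant profile $(p_1, \ldots, p_n) = (p, \ldots, p) \in \Pi_{i \in N} P_i$, the theorem supplies some $p_0 \in P_0$ and $\gamma \in \Delta(N)$ with
\begin{equation*}
    p_0 = \sum_{i \in N} \gamma_i p_i = \sum_{i \in N} \gamma_i p = p,
\end{equation*}
so $p \in P_0$. Varying $p$ gives $\bigcap_{i \in N} P_i \subseteq P_0$.

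For part (b), assume $P_i = P$ for every $i \in N$ where $P \subseteq \Delta(S)$; since each $P_i$ is convex, $P$ is convex. Common-Taste Pareto combined with Theorem 2 of \citet{danan2016robust} yields
\begin{equation*}
    P_0 \subseteq \mathrm{conv}\Bigl(\,\bigcup_{i \in N} P_i\Bigr) = \mathrm{conv}(P) = P,
\end{equation*}
so $P_0 \subseteq P$. Conversely, part (a) applied to this profile gives $P = \bigcap_{i \in N} P_i \subseteq P_0$. Combining the two inclusions yields $P_0 = P$.

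There is no real obstacle here: the content of the corollary is entirely packaged inside the two characterization theorems, and the only computational point to be careful about is noting that convex combinations of copies of the same prior return that prior, and that convexity of each $P_i$ (part of the Bewley framework) makes the convex hull operation in Danan et al.'s theorem vacuous in the common-belief case. The proof can therefore be presented as two short paragraphs with no further lemmas.
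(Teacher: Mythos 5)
Your proposal is correct and matches the paper's own argument exactly: part (a) is obtained by applying Theorem \ref{thm:common} to the constant profile $(p,\ldots,p)$ so that $p_0=\sum_{i\in N}\gamma_i p=p$, and part (b) combines this with the inclusion $P_0\subseteq \mathrm{conv}\bigl(\bigcup_{i\in N}P_i\bigr)=P$ from Theorem 2 of Danan et al., using convexity of the common Bewley belief set. No gaps.
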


Finally, we make a remark on the compatibility of aggregation rules obtained in the above theorem and Theorem 2 of \citet{danan2016robust}. 
As explained, Common-Taste Pareto and Common-Taste Pareto{\pstar} have the opposite effects, but we can see that there always exist social beliefs compatible with both implications. 
For instance, the sets $P_0$ such that $P_0 = \sum_{i\in N}\gamma_i P_i$ for some $\gamma\in \Delta(N)$ or $P_0 =\mathrm{conv}\,\big(\bigcup_{i\in N} P_0\big)$ satisfy the two conditions. 
Thus, we can always impose both common-taste Paretian axioms at the same time. 

\section{Discussion}
\label{sec:dis}

\subsection{Belief Exchange}

In Section \ref{sec:main}.\ref{sub:common}, we have provided a possibility result by focusing on the Paretian principles restricted to pairs of acts that have no taste disagreement. 
Since these conditions require the social preference to respect unanimity in very specific situations, 
one might be concerned that the obtained aggregation rules may ignore other non-spurious agreements: for instance, agreements among individuals with minor differences in tastes. 
Here, we argue that the aggregation rules studied in Section \ref{sec:main}.\ref{sub:common} can respect reasonable agreement that are not covered by the common-taste Paretian principles. 

Consider a hypothetical situation in which individuals suspect that their beliefs might be incorrect, and hence, compare acts by using not only their own prior sets but also those of others---that is, by imaginarily exchanging beliefs. If they have opposite tastes and beliefs, then they cannot reach agreement under this thought experiment. By contrast, when the difference in their preferences (especially their tastes) is not too large, they can attain agreement even in this hypothetical scenario. \citet{billot2021utilitarian} introduced a restricted Paretian principle that focuses on these robust consensuses in the SEU-agent model. 
We extend their axiom to our model. 

\begin{axiom}[Exchange Pareto]
    For all $f,g\in \mathcal{F}$, if $\sum_{s\in S}p_j(s)u_i(f(s))\geq \sum_{s\in S}p_j(s)u_i(g(s))$ for all $i,j\in N$ and all $p_j\in P_j$, then $f\succsim_0 g$. 
\end{axiom}

We then define the ``dual'' requirement.
It focuses on hypothetical situations where every belief set contains a prior that prevents each individual from judging $f$ to be at least as good as $g$.
The axiom below requires that when all individuals are in such a position, the social preference should likewise refrain from deeming $f$ weakly better than $g$.

\begin{axiom}[Exchange Pareto{\pstar}]
    For all $f,g\in \mathcal{F}$, if there exists $(p_1,\ldots, p_n) \in \prod_{j\in N} P_j$ such that  $\sum_{s\in S} p_j(s) u_i(g(s)) > \sum_{s\in S} p_j (s) u_i(f(s))$ for all $i,j\in N$, then $f \not\succsim_0 g$. 
\end{axiom}

Note that, unlike Common-Taste Pareto and Common-Taste Pareto{\pstar}, the above two axioms do not directly restrict the pairs of acts---we can apply them in the cases where individuals have different tastes on related outcomes, as long as the conditions in the axioms are satisfied. 
The following propositions show that each of these two principles has the same implications as its corresponding common-taste Paretian axiom. 

\begin{proposition}
\label{prop:ex}
    Suppose that for each $i\in N\cup \{ 0 \}$, $\succsim_i$ is a Bewley preference relation associated with $\qty(u_i, P_i)$ and $\qty(\succsim_i)_{i\in N}$ satisfies c-minimal agreement. Then the following two statements are equivalent: 
    \begin{enumerate}
        \item[(i)] Exchange Pareto holds; 
        \item[(ii)] there exists $\qty(\alpha, \beta) \in \qty( \mathbb{R}_+^n \setminus \{ \mathbf{0} \} ) \times \mathbb{R}$ such that $u_0 = \sum_{i\in N} \alpha_i u_i + \beta$ and $P_0 \subseteq \textup{conv}\, \qty( \bigcup_{i\in N} P_i )$. 
    \end{enumerate}
\end{proposition}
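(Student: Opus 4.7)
The plan is to prove the two directions separately, with the bulk of the work concentrated in (i) $\Rightarrow$ (ii). The direction (ii) $\Rightarrow$ (i) is straightforward: if $p_0 \in P_0 \subseteq \mathrm{conv}(\bigcup_j P_j)$, then $p_0$ can be written as a convex combination of priors drawn from various $P_j$'s. The premise of Exchange Pareto guarantees that $\sum_s q(s)[u_i(f(s)) - u_i(g(s))] \geq 0$ for every $i \in N$ and every $q \in \bigcup_j P_j$, so by linearity the same inequality holds with $p_0$ in place of $q$ for every $i$. Multiplying by $\alpha_i \geq 0$, summing over $i$, and absorbing $\beta$ yields $\sum_s p_0(s) u_0(f(s)) \geq \sum_s p_0(s) u_0(g(s))$; since $p_0 \in P_0$ is arbitrary, $f \succsim_0 g$.

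For (i) $\Rightarrow$ (ii), I would first derive the utilitarian form of $u_0$. Restricting Exchange Pareto to constant acts $x, y \in X$ makes the premise collapse to $u_i(x) \geq u_i(y)$ for all $i \in N$ (the prior $p_j$ is immaterial when both acts are constant), and the conclusion becomes $u_0(x) \geq u_0(y)$. This is the classical Pareto condition on outcomes, which combined with c-minimal agreement and the affine structure of every $u_i$ yields $u_0 = \sum_{i\in N} \alpha_i u_i + \beta$ with $(\alpha, \beta) \in (\mathbb{R}_+^n \setminus \{\mathbf{0}\}) \times \mathbb{R}$ via Harsanyi's aggregation theorem (cf.\ \citealp{de1995note}).

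The belief-inclusion part is where the main obstacle lies. I would argue by contradiction: suppose there exists $p_0 \in P_0 \setminus \mathrm{conv}(\bigcup_i P_i)$. Since $\mathrm{conv}(\bigcup_i P_i)$ is a compact convex subset of $\Delta(S)$, a separation argument produces a function $h : S \to \mathbb{R}$ (obtained by shifting a separating vector by the appropriate constant, using that all $p \in \Delta(S)$ satisfy $\sum_s p(s) = 1$) with $\sum_s p(s) h(s) \geq 0$ for every $p \in \bigcup_i P_i$ but $\sum_s p_0(s) h(s) < 0$. Rescaling $h$ so that its range lies in $[-1,1]$, set $\lambda(s) = \max\{h(s), 0\}$ and $\mu(s) = \max\{-h(s), 0\}$, so that both lie in $[0,1]$ and $\lambda(s) - \mu(s) = h(s)$. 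Using c-minimal agreement, fix $x^*, x_* \in X$ with $\delta_i := u_i(x^*) - u_i(x_*) > 0$ for every $i$, and define $f(s) = \lambda(s) x^* + (1-\lambda(s)) x_*$ and $g(s) = \mu(s) x^* + (1-\mu(s)) x_*$. A direct computation gives $u_i(f(s)) - u_i(g(s)) = \delta_i h(s)$, so the sign of $\sum_s p(s)[u_i(f(s)) - u_i(g(s))]$ is controlled entirely by $\sum_s p(s) h(s)$, independently of $i$. This activates Exchange Pareto's premise for every $(i, j)$ and every $p_j \in P_j$, yet $\sum_s p_0(s)[u_0(f(s)) - u_0(g(s))] = \delta_0 \sum_s p_0(s) h(s) < 0$ (with $\delta_0 = \sum_i \alpha_i \delta_i > 0$, since at least one $\alpha_i$ is positive), contradicting $f \succsim_0 g$. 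The delicate step is precisely this mixture construction: it is what allows the single separating direction $h$ to be realized simultaneously across all individuals' utility scales, which is what makes Exchange Pareto bite against a misplaced $p_0$.
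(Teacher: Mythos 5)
Your proof is correct, but the forward direction takes a genuinely different route from the paper's. For (i) $\Rightarrow$ (ii), the paper does not argue directly: it shows that Exchange Pareto implies Common-Taste Pareto (for $f,g$ with no taste disagreement, the common ordinal utility $v$ turns unanimous weak preference into the cross-belief inequalities in Exchange Pareto's premise) and then invokes Theorem 2 of \citet{danan2016robust} as a black box to obtain both the utilitarian form and $P_0\subseteq \mathrm{conv}\,\qty(\bigcup_{i\in N} P_i)$. You instead reprove the needed content from scratch: the utilitarian form from the restriction to constant acts plus \citet{de1995note}, and the belief inclusion by strictly separating a putative $p_0\notin \mathrm{conv}\,\qty(\bigcup_{i\in N} P_i)$ and then building the mixtures $f(s)=\lambda(s)x^*+(1-\lambda(s))x_*$ and $g(s)=\mu(s)x^*+(1-\mu(s))x_*$, so that $u_i(f(s))-u_i(g(s))=\delta_i h(s)$ has a common sign across agents. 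This is sound: the convex hull of a finite union of compact sets in $\mathbb{R}^S$ is compact, so strict separation applies, and $\lambda(s),\mu(s)\in[0,1]$ keeps the mixtures in the convex set $X$. Note, though, that your $f$ and $g$ take values in $\mathrm{conv}\,\{x^*,x_*\}$ and hence have no taste disagreement, so your construction in effect re-establishes the relevant half of Danan et al.'s theorem rather than exploiting any extra strength of Exchange Pareto. The backward direction is essentially identical to the paper's computation. The paper's route buys brevity; yours buys self-containedness and an explicit witness pair of acts.
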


\begin{proposition}
\label{prop:exstar}
    Suppose that for each $i\in N\cup \{ 0 \}$, $\succsim_i$ is a Bewley preference relation $\qty(u_i, P_i)$ and $\qty(\succsim_i)_{i\in N}$ satisfies c-minimal agreement. Then the following two statements are equivalent: 
    \begin{enumerate}
        \item[(i)] Exchange Pareto{\pstar} holds; 
        \item[(ii)] there exists $\qty(\alpha, \beta) \in \qty( \mathbb{R}_+^n \setminus \{ \mathbf{0} \} ) \times \mathbb{R}$ such that $ u_0 = \sum_{i\in N} \alpha_i u_i + \beta$, 
        and for all $(p_1,  p_2,\ldots, p_n) \in \Pi_{i\in N} P_i$, there exists $p_0 \in P_0$ such that $p_0 = \sum_{i\in N} \gamma_i p_i$ for some $\gamma\in \Delta(N)$. 
    \end{enumerate}
\end{proposition}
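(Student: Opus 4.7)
The plan is to establish the two directions by (a) a direct computation for (ii) $\Rightarrow$ (i), and (b) showing that Exchange Pareto{\pstar} implies Common-Taste Pareto{\pstar}, so that Theorem \ref{thm:common} closes the argument.

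For the easy direction, (ii) $\Rightarrow$ (i), I would fix $f,g\in\mathcal{F}$ and a tuple $(p_1,\dots,p_n)\in\prod_{j\in N}P_j$ satisfying the hypothesis of Exchange Pareto{\pstar}. By (ii) there exist $\gamma\in\Delta(N)$ and $p_0\in P_0$ with $p_0=\sum_{j}\gamma_j p_j$. Substituting $u_0=\sum_i\alpha_i u_i+\beta$ and expanding, one gets
\begin{equation}
\sum_{s}p_0(s)\bigl[u_0(g(s))-u_0(f(s))\bigr]=\sum_{j\in N}\gamma_j\sum_{i\in N}\alpha_i\Bigl[\sum_{s}p_j(s)u_i(g(s))-\sum_{s}p_j(s)u_i(f(s))\Bigr].
\end{equation}
Each bracketed term is strictly positive by assumption, the $\alpha_i$ are nonnegative with at least one strictly positive, and the $\gamma_j$ are nonnegative and sum to one, so the right-hand side is strictly positive. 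Hence $p_0$ witnesses $f\not\succsim_0 g$.

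For (i) $\Rightarrow$ (ii), the strategy is to show that Exchange Pareto{\pstar} entails Common-Taste Pareto{\pstar}; since c-minimal agreement is assumed, Theorem \ref{thm:common} then gives (ii). Take $f,g\in\mathcal{F}$ with no taste disagreement such that $f\not\succsim_i g$ for every $i\in N$. For each $i$ choose $p_i\in P_i$ with $\sum_s p_i(s)u_i(g(s))>\sum_s p_i(s)u_i(f(s))$. The key observation is that, because all individuals rank outcomes in $C:=\mathrm{conv}\,(f(S)\cup g(S))$ identically and none of the $u_i$ can be constant on $C$ (otherwise $f\sim_i g$, contradicting $f\not\succsim_i g$), the restrictions $u_i|_C$ and $u_j|_C$ are positive affine transformations of each other, say $u_i|_C=a_{ij}u_j|_C+b_{ij}$ with $a_{ij}>0$. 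Consequently, for every $i,j\in N$,
\begin{equation}
\sum_{s}p_j(s)u_i(g(s))-\sum_{s}p_j(s)u_i(f(s))=a_{ij}\Bigl[\sum_{s}p_j(s)u_j(g(s))-\sum_{s}p_j(s)u_j(f(s))\Bigr]>0,
\end{equation}
so $(p_1,\dots,p_n)$ satisfies the hypothesis of Exchange Pareto{\pstar}, yielding $f\not\succsim_0 g$. This establishes Common-Taste Pareto{\pstar}, and Theorem \ref{thm:common} then supplies the desired taste-utilitarian form and belief condition.

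The main obstacle is the affine-identification step on $C$: one must verify that ``no taste disagreement'' truly forces the $u_i|_C$ to be common up to positive affine transformation, handling carefully the degenerate case in which some $u_i$ is constant on $C$ (ruled out, as above, by $f\not\succsim_i g$). Everything else reduces either to a routine integration against $p_0=\sum_j\gamma_j p_j$ in the linear aggregator $u_0=\sum_i\alpha_i u_i+\beta$, or to a direct invocation of Theorem \ref{thm:common}.
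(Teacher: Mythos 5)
Your proposal is correct and follows essentially the same route as the paper: the (ii) $\Rightarrow$ (i) direction is the same direct expansion of $u_0=\sum_i\alpha_i u_i+\beta$ against $p_0=\sum_j\gamma_j p_j$, and the (i) $\Rightarrow$ (ii) direction likewise proceeds by showing that Exchange Pareto{\pstar} implies Common-Taste Pareto{\pstar} via the common affine representation on $\mathrm{conv}\,(f(S)\cup g(S))$ and then invoking Theorem \ref{thm:common}. Your explicit handling of the degenerate case where some $u_i$ is constant on the convex hull is a small added care the paper leaves implicit, but it does not change the argument.
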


It is easy to see that Exchange Pareto (resp.~Exchange Pareto{\pstar}) is stronger than Common-Taste Pareto (resp.~Common-Taste Pareto{\pstar}). 
The above results suggest that aggregation rules considered in Section \ref{sec:main}.\ref{sub:common} can respect a broader range of agreements than those taken into account by Common-Taste Pareto or Common-Taste Pareto{\pstar}:
Small divergences in tastes are permissible as long as they do not lead to the cancellation of double disagreements.

\subsection{Common-Taste Pareto{\pstar} and complete social preference relations}

We have studied social preference relations that may violate completeness. 
However, society is sometimes required to have a complete preference relation in order to determine which alternatives are socially desirable in any situation.
In this part, we discuss the relationship between our second theorem and the completeness of the social preference relation. 

One approach to deal with this issue is to assume that $\succsim_0$ is an SEU preference relation. 
However, under Common-Taste Pareto{\pstar}, there may be no compatible SEU social preference relation because this axiom requires the social belief set to be large enough. 
For instance, suppose that there is a pair $(f,g)$ of acts such that all individuals cannot compare them (i.e., $f\not \succsim_i g$ and $g\not \succsim_i f$ for all $i\in N$).
Then, under completeness, Common-Taste Pareto{\pstar} implies $g\succ_0 f$ and $f\succ_0 g$, but they are incompatible. 
From this observation, the following question arises: When can the SEU social preference relation satisfy Common-Taste Pareto{\pstar}?

For each profile $(P_1, \ldots, P_n)$ of belief sets, let $\mathbb{P}(P_1, \ldots, P_n)$ be the collection of subsets of $\Delta(S)$ such that
\begin{equation*} 
    P\in \mathbb{P}(P_1, \ldots, P_n)
    \iff 
    \qty[~ P = \mathrm{conv}\, \{ p_1,\ldots, p_n \} ~~\text{for some} ~~ (p_1,\ldots, p_n) \in \Pi_{i\in N}P_i  ~]. 
\end{equation*}
The following result characterizes the condition that the SEU social preference relation can satisfy Common-Taste Pareto{\pstar}. 

\begin{corollary}
    Suppose that for each $i\in N$, $\succsim_i$ is a Bewley preference relation $\qty(u_i, P_i)$, $\succsim_0$ is an SEU preference relation $(u_0, p_0)$, and $\qty(\succsim_i)_{i\in N}$ satisfies c-minimal agreement. Then the following two statements are equivalent: 
    \begin{enumerate}
        \item[(i)] there exists $(u_0, p_0)$ such that Common-Taste Pareto{\pstar} holds; 
        \item[(ii)] $\bigcap_{P\in \mathbb{P}(P_1, \ldots, P_n)} P \neq \emptyset$. 
    \end{enumerate}
\end{corollary}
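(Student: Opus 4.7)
The plan is to observe that an SEU preference relation with prior $p_0$ is exactly a Bewley preference relation whose belief set is the singleton $P_0 = \{p_0\}$. Consequently, this corollary is not an independent result but a direct specialization of Theorem \ref{thm:common} to the case $|P_0| = 1$. The main work is simply to translate the conclusion of Theorem \ref{thm:common} in that special case and recognize the resulting condition as nonemptyness of the intersection defining $\mathbb{P}(P_1,\ldots,P_n)$.

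For the direction (i) $\Rightarrow$ (ii), I would suppose there exist $u_0$ and $p_0$ such that the SEU preference $(u_0,p_0)$ satisfies Common-Taste Pareto{\pstar}. Viewing $\succsim_0$ as a Bewley preference with $P_0 = \{p_0\}$ and applying Theorem \ref{thm:common}, for every $(p_1,\ldots,p_n) \in \prod_{i\in N} P_i$ there must exist an element of $P_0$ expressible as $\sum_{i\in N}\gamma_i p_i$ for some $\gamma \in \Delta(N)$. Since $P_0 = \{p_0\}$, this element must be $p_0$ itself, giving $p_0 \in \mathrm{conv}\{p_1,\ldots,p_n\}$ for every choice of $(p_1,\ldots,p_n)$. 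Hence $p_0 \in \bigcap_{P \in \mathbb{P}(P_1,\ldots,P_n)} P$, and in particular the intersection is nonempty.

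For the direction (ii) $\Rightarrow$ (i), I would pick any $p_0 \in \bigcap_{P \in \mathbb{P}(P_1,\ldots,P_n)} P$ and choose, say, $u_0 = u_1$ (which corresponds to $\alpha_1 = 1$, $\alpha_j = 0$ for $j \neq 1$, and $\beta = 0$, so $(\alpha,\beta) \in (\mathbb{R}_+^n\setminus\{\mathbf{0}\})\times\mathbb{R}$ is satisfied). By definition of the intersection, for every $(p_1,\ldots,p_n) \in \prod_{i\in N} P_i$ we have $p_0 \in \mathrm{conv}\{p_1,\ldots,p_n\}$, so $p_0 = \sum_{i\in N}\gamma_i p_i$ for some $\gamma \in \Delta(N)$. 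Treating the SEU preference $(u_0,p_0)$ as a Bewley preference with $P_0 = \{p_0\}$, the conditions in statement (ii) of Theorem \ref{thm:common} are satisfied, so Common-Taste Pareto{\pstar} holds.

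There is no substantive obstacle here—the proof is essentially a bookkeeping exercise once one notices that the belief-aggregation clause in Theorem \ref{thm:common} says precisely that $P_0$ must meet every set in $\mathbb{P}(P_1,\ldots,P_n)$, which for a singleton $P_0$ collapses to membership in the intersection. The only point that warrants explicit mention is that the taste-aggregation clause places no obstruction, because one can always pick a valid weighted-utilitarian $u_0$ (e.g., equal to any individual utility), so the existence question reduces entirely to the belief side.
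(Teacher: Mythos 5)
Your proof is correct and follows essentially the same route as the paper, which likewise treats the SEU relation $(u_0,p_0)$ as a Bewley relation with singleton belief set $\{p_0\}$ and reduces both directions to Theorem \ref{thm:common}, noting that a point $p \in \bigcap_{P\in \mathbb{P}(P_1,\ldots,P_n)}P$ is exactly a distribution expressible as $\sum_{i\in N}\gamma_i p_i$ for every combination $(p_1,\ldots,p_n)\in \prod_{i\in N}P_i$. Your write-up is in fact slightly more complete than the paper's sketch, since you also spell out the (i) $\Rightarrow$ (ii) direction and the observation that the taste-aggregation clause poses no obstruction.
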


Statement (ii) characterizes the condition for the existence of compatible SEU preference relations by the property of parameters. 
Consider a hypothetical scenario in which each individual chooses a probability distribution from their own belief sets, and society forms its belief set by taking the convex hull of all the chosen distributions.
Then, the second statement claims that under this procedure, some probability distribution is always included in the hypothetical belief set constructed as above, regardless of individuals' choices.
To see why the above equivalence holds, note that the existence of a probability distribution $p\in \bigcap_{P\in \mathbb{P}(P_1, \ldots, P_n)} P$ means that for each combination $(p_1,\ldots, p_n)\in \prod_{i\in N}P_i $, there exists $\gamma\in \Delta(N)$ such that $p = \sum_{i\in N}\gamma_i p_i$.
Then, by Theorem \ref{thm:common}, we can set $p_0 = p$ without violating Common-Taste Pareto{\pstar}. 

Another approach is to treat incomplete multiprior preference relations as ``objective rationality'' relations \citep{gilboa2010objective} and to separate the problem of preference aggregation from that of constructing complete rankings. 
For the latter part, a general completion model was considered in \citet{danan2016robust}. 
They proposed utility functions that evaluate each act $f\in \mathcal{F}$ by an act-dependent weighted sum of the minimum expected utility---$\min_{p\in P_0} \sum_{s\in S} p(s) u_0(f(s))$---and the maximum one---$\max_{p\in P_0} \sum_{s\in S} p(s) u_0(f(s))$. 
For the cautious completion procedures, \citet{gilboa2010objective} characterized the maxmin expected utility model based on Bewley preference relations. 
In this framework, foundations for the $\varepsilon$-contamination model \citep{kopylov2009choice}, the uncertainty-averse model, the variational model  \citep{cerreia2016objective}, and the $\alpha$-maxmin expected utility model \citep{frick2022objective} were also provided.
It should be noted that social preference relations constructed in this way do not satisfy Common-Taste Pareto{\pstar} itself; however, they are compatible with this axiom in a weak sense, as discussed in the following subsection. 
This approach allows us to construct a complete social ranking under arbitrary preference profiles. 

\subsection{Application: aggregating MBA preference relations}

The scope of our theorems, especially Theorem \ref{thm:common}, is not only on incomplete preference aggregation. 
As considered in \citet{danan2016robust} and \citet{Mudekereza2025WP}, our results are applicable to the aggregation of utility functions of ambiguity preferences with regularity properties. 
Suppose that each $\succsim_i$ is a monotone Bernoullian
Archimedean (MBA) preference relation; that is, $\succsim_i$ satisfies nontriviality, completeness, transitivity, continuity, statewise monotonicity, and risk independence \citep{cerreia2011rational}. 
Then, $\succsim_i$ is represented by a function $U_i$ such that for all $f\in \mathcal{F}$, 
\begin{equation}
    U_i (f) =  \alpha_i (f) \min_{p\in P_i} \sum_{s\in S} p(s) u_i(f(s)) + \qty(1 - \alpha_i (f)) \max_{p\in P_i}  \sum_{s\in S} p(s) u_i(f(s)), 
\end{equation}
where $u_i$ is a real-valued nonconstant affine function, $P_i$ is a nonempty compact convex subset of $\Delta(S)$, and $\alpha_i$ is a function from $\mathcal{F}$ to $[0,1]$. 
It is known that for $\succsim_i$, the Bewley preference relation $\qty(u_i, P_i)$ is the maximal subrelation $\succsim^C_i$ that satisfies independence; that is, for all $f,g\in \mathcal{F}$, 
\begin{equation}
    f \succsim^C_i g \iff \qty[~ \alpha f + (1-\alpha ) h \succsim_i  \alpha g + (1-\alpha) h  ~~~\text{for all $h\in \mathcal{F}$ and all $\alpha\in [0,1]$}~]. 
\end{equation}

We say that \textit{Revealed Common-Taste Pareto{\pstar}} holds for a profile $\qty(\succsim_i)_{i\in N\cup \{ 0 \}}$ of MBA preference relations if for all $f,g\in\mathcal{F}$, 
\begin{equation*}
    \qty[~ f\not\succsim_i^C g ~~~\text{for all $i\in N$} ~] \implies f\not\succsim_0^C g. 
\end{equation*}
By the construction of $\succsim^C_i$, this property is equivalent to the condition that for all $f,g \in \mathcal{F}$, if there exists $\qty(\beta, (h_1, \ldots, h_n)) \in [0,1]^n \times \mathcal{F}^n$ such that $\beta_i g +(1-\beta_i ) h_i \succ_i \beta_i f +(1-\beta_i ) h_i$ for each $i\in N$, then 
 $\beta_0 g +(1-\beta_0 ) h_0 \succ_0 \beta_0 f +(1-\beta_0 ) h_0$ must hold for some $(\beta_0,h_0 ) \in [0,1] \times  \mathcal{F}$. 
Thus, Revealed Common-Taste Pareto{\pstar} requires that if $f$ is not robustly better than $g$ in the above sense for all individuals, then society should not have the robust relation either. 
 
Then, Theorem 2 is applicable for aggregating MBA preferences and provides a guideline on how the social baseline belief set $P_0$ should be formed. 
For the aggregation of maxmin expected utility preference relations \citep{gilboa1989maxmin}, our result offers a direct restriction on the social belief set. 
As another special case, when aggregating variational preference relations \citep{maccheroni2006ambiguity}, our aggregation rule determines which belief should be taken into account since the effective domain of the perception function coincides with the belief set characterizing the subrelation $\succsim^C_0$ \citep{CMMR2015}. 

\section*{APPENDIX}

\renewcommand{\thesubsection}{A.\arabic{subsection}}
\setcounter{subsection}{0}

For each $s\in S$, the degenerated probability distribution on state $s$ is denoted by $\delta_s$. 
For a subset $A$ and an element $a$ of $X$ or $\mathbb{R}$, let $A + a$ denote the set $\{ a' + a \mid a'\in A\}$. 

\subsection{Proof of Theorem \ref{thm:stand}}

We only prove that (i) implies (ii). Note that since each $\succsim_i$ is complete on $X$, Pareto{\pstar} implies that for all $x,y\in X$, if $x\succ_i y$ for each $i\in N$, then $x\succ_0 y$. 
Since each $u_i$ is an affine function, we can apply Proposition 2 of \citet{de1995note}. 
Thus, there exists $(\alpha, \beta) \in  \qty( \mathbb{R}_+^n \setminus \{ \mathbf{0} \} ) \times \mathbb{R}$ such that $u_0 = \sum_{i\in N} \alpha_i u_i + \beta$.

We then show the latter part of statement (ii). 
Since $\qty(\succsim_i)_{i\in N}$ satisfies c-diversity, for each $i\in N$, there exist $x^{i*}, x_{i*} \in X$ such that $x^{i*} \succ_i x_{i*}$ and $x^{i*} \sim_j  x_{i*}$ for all $j\in N\setminus \{ i \}$. 
The following intermediate result holds. 

\begin{lemma}
\label{lem:large}
    Pareto{\pstar} implies that $P_0 \supseteq \bigcup_{i\in N \,:\, \alpha_i > 0} P_i$. 
\end{lemma}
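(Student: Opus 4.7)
The plan is to argue by contradiction and construct a pair of acts that violates Pareto*. Assume there exist $i \in N$ with $\alpha_i > 0$ and $p^{\ast} \in P_i \setminus P_0$. Since $P_0 \subseteq \Delta(S)$ is compact and convex, strict separation yields $v \in \mathbb{R}^S$ and constants $c_1 < c_2$ with $\sum_s v(s) p^{\ast}(s) \leq c_1$ and $\sum_s v(s) q(s) \geq c_2$ for every $q \in P_0$. Replacing $v$ by $v - c\,\mathbf{1}$ for any $c \in (c_1, c_2)$ shifts each inner product with a probability vector by $c$, so we may arrange
\begin{equation*}
\sum_{s} v(s) p^{\ast}(s) < 0 < \sum_{s} v(s) q(s) \quad \text{for every } q \in P_0.
\end{equation*}

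I then convert $v$ into target utility-difference profiles. Set $C := \alpha_i^{-1} \sum_{j \ne i} \alpha_j$ and, for small $\epsilon > 0$, define $a_i(s) := v(s)/\alpha_i + \epsilon C$ and $a_k(s) := -\epsilon$ for each $k \in N \setminus \{i\}$. Two properties hold by construction. First, $\sum_{k \in N} \alpha_k a_k(s) = v(s)$ for every $s$, so any realization will produce $u_0(f(s)) - u_0(g(s)) = v(s)$, hence $f \succsim_0 g$ by the separation inequality. Second, for every $k \in N$ some prior in $P_k$ integrates $a_k$ to a strictly negative value: when $k \ne i$ any prior gives $-\epsilon$, and when $k = i$ the prior $p^{\ast}$ gives $\alpha_i^{-1}\sum_s p^{\ast}(s) v(s) + \epsilon C$, which is negative once $\epsilon$ is small.

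To realize $(a_k(s))_{k,s}$ as an actual utility-difference matrix, I use c-diversity. For each $k \in N$ pick $x^{k\ast}, x_{k\ast} \in X$ with $D_k := u_k(x^{k\ast}) - u_k(x_{k\ast}) > 0$ and $u_l(x^{k\ast}) = u_l(x_{k\ast})$ for $l \ne k$, and set $\bar x := (2n)^{-1} \sum_{k}(x^{k\ast} + x_{k\ast}) \in X$. After further rescaling $v$ and $\epsilon$ so that $|a_k(s)|/D_k \leq 1/(2n)$ for all $k$ and $s$, the acts
\begin{equation*}
f(s) := \bar x + \sum_{k \in N} \tfrac{a_k(s)}{D_k}\bigl(x^{k\ast} - x_{k\ast}\bigr), \qquad g(s) := \bar x,
\end{equation*}
can be rewritten as convex combinations of $\{x^{k\ast}, x_{k\ast}\}_{k \in N}$ with coefficients $\tfrac{1}{2n} \pm \tfrac{a_k(s)}{D_k} \geq 0$ summing to $1$, hence both lie in $X$; and by the orthogonality built in by c-diversity, $u_k(f(s)) - u_k(g(s)) = a_k(s)$ for every $k, s$. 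Combined with the previous step, this produces $f \succsim_0 g$ yet $f \not\succsim_k g$ for every $k \in N$, contradicting Pareto*.

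The main obstacle is the realization step: producing a pair of acts whose utility differences can be prescribed coordinate-by-coordinate across states. c-diversity, i.e., the linear independence of $u_1, \ldots, u_n$, is precisely the ingredient that supplies mutually independent directions in outcome space and makes this possible. Everything else is strict convex separation and arithmetic bookkeeping.
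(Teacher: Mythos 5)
Your proof is correct and follows essentially the same route as the paper's: separate the offending prior $p^\ast\in P_i\setminus P_0$ from $P_0$ by a hyperplane, use c-diversity to realize the separating functional as individual $i$'s utility profile while giving every other individual a uniform small statewise strict difference that makes them reject the comparison, and then show that for a sufficiently small perturbation the weighted-utilitarian social utility still ranks the acts the other way, contradicting Pareto{\pstar}. The only differences are presentational (a sign flip in the separation and explicit utility-difference bookkeeping in place of the paper's convex-combination construction of $f^\varepsilon$ and $x^\varepsilon$).
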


\begin{proof}
    Let $i\in N$ be such that $\alpha_i > 0$. Suppose to the contrary that $P_i$ is not a subset of $P_0 $. 
    Then, there exists $p_i\in P_i\setminus P_0$. 
    By the separating hyperplane theorem, there exists  $(\lambda, \kappa) \in \qty( \mathbb{R}^S\setminus \{ \mathbf{0} \} ) \times \mathbb{R}$ such that for all $p_0\in P_0$, 
    \begin{equation}
    \label{eq:hyperplane_P*}
    \sum_{s\in S} p_i(s) \lambda (s)  > \kappa > \sum_{s\in S} p_0(s) \lambda (s). 
    \end{equation}
    Without loss of generality, we can assume that for each $s\in S$, 
    \begin{equation}
    \label{eq:lem_normalize}
        \lambda (s), \kappa \in \qty[ \sum_{j\in N} {1\over n} u_i( x^{j*}), \sum_{j\in N} {1\over n} u_i (x_{j*})] . 
    \end{equation} 
    Take $\varepsilon \in (0, {1\over2}]$ arbitrarily.
    Let $f^\varepsilon\in \mathcal{F}$ be such that for each $s\in S$, 
    \begin{equation}
        f^\varepsilon(s)\in {1\over n}\mathrm{conv}\, \{ x^{i*}, x_{i*} \} + \sum_{j\in N\setminus \{i \}} {1\over n} \qty( \qty({1\over 2} + \varepsilon) x^{j*} +  \qty({1\over 2} - \varepsilon) x_{j*})
    \end{equation}
     and $u_i (f^\varepsilon (s) ) = \lambda(s)$.
     Furthermore, let $x^\varepsilon \in X$ be such that 
     \begin{equation}
         x^\varepsilon \in  {1\over n}\mathrm{conv}\, \{ x^{i*}, x_{i*} \} + \sum_{j\in N\setminus \{i \}} {1\over n} \qty( \qty({1\over 2} - \varepsilon) x^{j*} +  \qty({1\over 2} + \varepsilon) x_{j*})
    \end{equation}
     and $u_i (x^\varepsilon) = \kappa$. 
     Note that we can take such $f^\varepsilon$ and $x^\varepsilon$ by \eqref{eq:lem_normalize}. 
    Then, the first inequality in \eqref{eq:hyperplane_P*} can be rewritten as $\sum_{s\in S} p_i(s) u_i (f^\varepsilon(s)) >  u_i (x^\varepsilon)$, which implies that $x^\varepsilon \not\succsim_i f^\varepsilon$. 
    For each $j\in N\setminus \{i\}$, by the constructions of $x^\varepsilon$ and $f^\varepsilon$, we have that for each $s\in S$, 
    \begin{equation}
        u_j(f^\varepsilon (s)) - u_j(x^\varepsilon ) = {2\varepsilon\over n} \qty(u_j (x^{j*}) - u_j (x_{j*})) > 0, 
    \end{equation}
    which implies $x^\varepsilon \not\succsim_j f^\varepsilon$. 
    By Pareto{\pstar}, $x^\varepsilon \not\succsim_0 f^\varepsilon $ must hold. 

    By the second inequality in \eqref{eq:hyperplane_P*}, $ \alpha_i u_i (x^\varepsilon) - \sum_{s\in S} p_0(s) \alpha_i u_i (f^\varepsilon (s)) >0 $  for all $p_0\in P_0$ and the left-hand side is constant with respect to $\varepsilon$. 
    For each $j\in N\setminus\{i\}$ and each $p_0\in P_0$,  we have 
    \begin{equation}
        \sum_{s\in S} p_0(s) \alpha_j u_j (f^\varepsilon (s)) - \alpha_j u_j (x^\varepsilon)  = \alpha_j {2\varepsilon\over n} \qty(u_j (x^{j*}) - u_j (x_{j*})) > 0. 
    \end{equation}
    Thus, we can set $\varepsilon$ small enough that for all $p_0\in P_0$,
    \begin{align*}
        \sum_{j\in N\setminus \{i \}} \sum_{s\in S} p_0(s) \alpha_j u_j (f^\varepsilon (s)) -  \sum_{j\in N\setminus \{i \}} \alpha_j u_j (x^\varepsilon)  
        &< \alpha_i u_i (x^\varepsilon) - \sum_{s\in S} p_0(s) \alpha_i u_i (f^\varepsilon (s)). 
    \end{align*}
    Therefore, by $u_0 = \sum_{j\in N} \alpha_j u_j + \beta$, we have  $ u_0 (x^\varepsilon)  > \sum_{s\in S} p_0(s) u_0 (f^\varepsilon (s))$ for all $p_0\in P_0$, which means that $x^\varepsilon \succsim_0 f^\varepsilon$. This is a contradiction to the result of the last paragraph. 
\end{proof}

Let $N^+ = \{ i\in N \mid \alpha_i > 0\}$. 
First, we show that if there exists $i\in N^+$ such that $P_i$ is not a singleton, then $N^+$ is a singleton. 
Suppose otherwise. 
Without loss of generality, 
we assume that $\alpha_1, \alpha_2 >0$ and $P_1$ or $P_2$ is not a singleton. 
Then, we can take distinct $p_1 \in P_1$ and $p_2 \in P_2$. 
Define the real-valued functions $\varphi_{1}, \varphi_{2}$ on $\Delta(S)$ as for all $p\in \Delta(S)$, 
\begin{align*}
    \varphi_{1} (p)
    &= 
    \alpha_2 \sum_{s\in S} (p_1(s) - p_2(s)) \qty{ p(s) - {p_1(s) + p_2(s)\over 2}}  -\varepsilon ,
    \\
    \varphi_{2} (p)
    &= 
    \alpha_1 \sum_{s\in S} (p_2(s) - p_1(s)) \qty{ p(s) - {p_1(s) + p_2(s)\over 2}}  -\varepsilon, 
\end{align*}
where $\varepsilon \in \mathbb{R}_{++}$ is small enough that $\varepsilon < {1 \over 2} \min \{ \alpha_1, \alpha_2 \} \sum_{s\in S} (p_2 (s) - p_1(s))^2$. 
Note that 
\begin{equation}
\label{eq:var1>half}
    \varphi_1 (p_1) = {\alpha_2\over 2}  \sum_{s\in S} \qty(p_1 (s) - p_2(s))^2 - \varepsilon > 0
\end{equation} 
and 
\begin{equation}
\label{eq:var2>half}
    \varphi_2 (p_2) = {\alpha_1\over 2}  \sum_{s\in S} \qty(p_2 (s) - p_1(s))^2  -\varepsilon > 0. 
\end{equation}

Take $\nu\in (0,{1\over 2}]$ arbitrarily. 
Let 
\begin{equation}
    \hat{x}^\nu = \sum_{i =1,2} \qty( {1\over 2n}x^{i*} + {1\over 2n}x_{i*}) + {1 \over n} \sum_{i\in N\setminus \{1,2\} } \qty( \qty({1\over 2} - \nu) x^{i*} +  \qty({1\over 2} + \nu) x_{i*}). 
\end{equation}
Since $u_i(x^{j*}) = u_i(x_{j*})$ for each $i, j\in N$ such that $i\neq j$, we can 
take $b \in \mathbb{R}_{++}$ so that for all $s\in S$, 
\begin{equation*}
    b \varphi_1 (\delta_s) + u_1 (\hat{x}^\nu)  
    \in {1\over n} \qty[ u_1 (x_{1*} )  , u_1 (x^{1*} ) ] 
    +  {1 \over n} \sum_{i\in N\setminus \{1\} } u_1( x^{i*})
\end{equation*} 
and 
\begin{equation*}
    b \varphi_2 (\delta_s) + u_2 (\hat{x}^\nu)  
    \in {1\over n} \qty[ u_2 (x_{2*} )  , u_2 (x^{2*} ) ] 
    +  {1 \over n} \sum_{i\in N\setminus \{2\} } u_2( x^{i*}).   
\end{equation*} 
Let $\hat{f}^\nu \in \mathcal{F}$ be such that for each $s\in S$, 
\begin{equation}
    \hat{f}^\nu(s) \in {1\over n}  \mathrm{conv}\,\{ x^{1*}, x_{1*}
 \} + {1\over n}\mathrm{conv}\,\{  x^{2*}, x_{2*}\}  + {1 \over n} \sum_{i\in N\setminus \{1,2\} } \qty( \qty({1\over 2} + \nu) x^{i*} +  \qty({1\over 2} - \nu) x_{i*}),
\end{equation}
$ u_1 (\hat{f}^\nu (s)) = b \varphi_1 (\delta_s) + u_1 (\hat{x}^\nu) $, and  $u_2 (\hat{f}^\nu (s)) = b \varphi_2 (\delta_s) + u_2 (\hat{x}^\nu) $. 

Then, by \eqref{eq:var1>half} and \eqref{eq:var2>half},  we have that for each $i\in\{1,2 \}$, 
\begin{equation}
\label{eq:x_uundominated_byf}
    \sum_{s\in S} p_i (s) u_i(\hat{f}^\nu(s)) = \sum_{s\in S} p_i (s) ( b \varphi_i (\delta_s) + u_i (\hat{x}^\nu) ) = b \varphi_i(p_i) +u_i (\hat{x}^\nu) > u_i (\hat{x}^\nu), 
\end{equation}
which implies that $\hat{x}^\nu \not\succsim_1 \hat{f}^\nu$ and $\hat{x}^\nu \not\succsim_2 \hat{f}^\nu$.  
For each $i\in N\setminus \{1, 2\}$, by 
\begin{equation}
    u_i (\hat{f}^\nu (s)) - u_i (\hat{x}^\nu) = {2\nu \over n} \qty( u_i(x^{i*}) -  u_i(x_{i*}) ) > 0, 
\end{equation} 
for all $s\in S$, 
we have $\hat{x}^\nu \not\succsim_i \hat{f}^\nu$. 
By Pareto{\pstar}, we have $\hat{x}^\nu \not\succsim_0 \hat{f}^\nu$. 

Note that 
\begin{align}
    \alpha_1 u_1 (\hat{f}^\nu (s)) + \alpha_2 u_2 (\hat{f}^\nu (s)) 
    &= \alpha_1 \qty( b \varphi_1 (\delta_s) + u_1 (\hat{x}^\nu) ) + \alpha_2 \qty( b\varphi_2 (\delta_s) + u_2 (\hat{x}^\nu) )
    \\
    &= \alpha_1 u_1(\hat{x}^\nu) + \alpha_2 u_2(\hat{x}^\nu) - b (\alpha_1 + \alpha_2)\varepsilon. 
\end{align}
Thus, for all $p_0\in P_0$, we have 
\begin{equation}
     \sum_{i = 1,2} \alpha_i u_i (\hat{x}^\nu) -  \sum_{s\in S} p_0(s) \sum_{i = 1,2} \alpha_i  u_i (\hat{f}^\nu (s)) = b (\alpha_1 + \alpha_2)\varepsilon > 0, 
\end{equation} 
and the left-hand side is constant with respect to $\nu$. 
Also, by the constructions of $\hat{x}^\nu$ and $\hat{f}^\nu$, for all $p_0\in P_0$
\begin{equation}
    \sum_{s\in S}p_0(s) \sum_{i\in N\setminus\{1,2\}}\alpha_i u_i (\hat{f}^\nu (s)) - \sum_{i\in N\setminus\{1,2\}} \alpha_iu_i (\hat{x}^\nu) 
    = {2\nu \over n} \sum_{i\in N\setminus\{1,2\}} \alpha_i \qty( u_i(x^{i*}) -  u_i(x_{i*}) ) >0.
\end{equation}
We can take $\nu$ small enough that for all $p_0\in P_0$, 
\begin{equation}
     \sum_{s\in S}p_0 (s) \sum_{i\in N\setminus\{1,2\}}\alpha_i u_i (\hat{f}^\nu (s)) - \sum_{i\in N\setminus\{1,2\}} \alpha_iu_i (\hat{x}^\nu) < \sum_{i = 1,2} \alpha_i u_i (\hat{x}^\nu) -  \sum_{s\in S} p_0(s) \sum_{i = 1,2} \alpha_i  u_i (\hat{f}^\nu (s)). 
\end{equation}
Therefore, by $u_0 = \sum_{i\in N} \alpha_i u_i + \beta$, we have  $ u_0 (\hat{x}^\nu)  > \sum_{s\in S} p_0(s) u_0 (\hat{f}^\nu (s))$ for all $p_0\in P_0$,
which implies $\hat{x}^\nu  \succsim_0 \hat{f}^\nu$. 
This is a contradiction to the result of the previous paragraph. 

Next, we prove that if $P_i$ is a singleton for each $i\in N^+ $, then we have $P_j = P_k$ for all $j,k\in N^+$.
Suppose otherwise. 
Again, without loss of generality, 
we assume that $\alpha_1, \alpha_2>0$ but $P_1 $ and $ P_2$ are distinct singletons. 
Let $p_1, p_2\in \Delta(S)$ be such that  $P_1 = \{ p_1 \}$ and $P_2 = \{ p_2 \}$. 
We construct the functions $\varphi_1$ and $\varphi_2$, and, for each $\nu\in (0,{1\over 2}]$, define $\hat{x}^\nu\in X$ and $\hat{f}^\nu \in \mathcal{F}$ as in the first case. 
Then, for each $i\in \{ 1,2 \}$, \eqref{eq:x_uundominated_byf} holds again, which implies that $\hat{x}^\nu \not\succsim_1 \hat{f}^\nu$ and $\hat{x}^\nu \not\succsim_2 \hat{f}^\nu$. 
By construction, $\hat{x}^\nu \not\succsim_j \hat{f}^\nu$ for all $j\in N\setminus \{1, 2\}$.
By Pareto{\pstar}, $\hat{x}^\nu \not\succsim_0 \hat{f}^\nu$ holds. 
However, by the same argument as in the first case, we can take $\nu$ small enough that $u_0(\hat{x}^\nu ) > \sum_{s\in S} p_0(s) u_0(\hat{f}^\nu (s))$ for all $p_0\in P_0$. 
Thus, we obtain $\hat{x}^\nu \succsim_0 \hat{f}^\nu$, which is a contradiction.

\subsection{Proof of Theorem \ref{thm:common}}

First, we prove that (i) implies (ii). 
By c-minimal agreement, there exist $x^*, x_* \in X$ such that $x^*\succ_i  x_*$ for all $i\in N$.
Without loss of generality, we assume that $u_i(x^*) = 1$ and $u_i(x_*) = 0$ for each $i\in N$.

Since each $\succsim_i$ is complete on $X$, Common-Taste Pareto{\pstar} implies that for all $x,y\in X$, if $x\succ_i y$ for each $i\in N$, then $x\succ_0 y$. 
Since each $u_i$ is an affine function, we can apply Proposition 2 of \citet{de1995note} again. 
Thus, there exists $(\alpha, \beta) \in  \qty( \mathbb{R}_+^n \setminus \{ \mathbf{0} \} ) \times \mathbb{R}$ such that $u_0 = \sum_{i\in N} \alpha_i u_i + \beta$. 

We show that for all $(p_1,  p_2,\ldots, p_n) \in \Pi_{i\in N} P_i$, there exists $p_0 \in P_0$ such that for some $\gamma\in \Delta(N)$, 
\begin{equation*}
    p_0 = \sum_{i\in N} \gamma_i p_i. 
\end{equation*}
Suppose to the contrary that for some $(p_1,  p_2,\ldots, p_n) \in \Pi_{i\in N} P_i$, there is no $p_0 \in P_0$ such that $p_0 \in \mathrm{conv}\,  \{ p_1,  p_2,\ldots, p_n \}$. 
By the separating hyperplane theorem, there exists 
$(\lambda, \kappa) \in \qty( \mathbb{R}^S\setminus \{ \mathbf{0} \} ) \times \mathbb{R}$ such that for all $i\in N$ and all $p_0\in P_0$, 
\begin{equation}
\label{eq:hyperplane_CTP*}
    \sum_{s\in S} p_i(s) \lambda (s)  > \kappa > \sum_{s\in S} p_0(s) \lambda (s). 
\end{equation}
Without loss of generality, we can take $\lambda$ and $\kappa$ so that $\lambda(s), \kappa \in [0,1]$ for each $s\in S$. 
Let $f\in \mathcal{F}$ be such that $f(s) \in \mathrm{conv}\, \{x^*, x_*\}$ and $u_i (f(s)) = \lambda(s)$ for all $s\in S$ and all $i\in N\cup\{0\}$, and let $x\in X$ be such that $x \in \mathrm{conv}\, \{x^*, x_*\}$ and $u_i (x) =\kappa $ for  all $i\in N\cup\{0\}$.
By construction, $f$ and $x$ have no taste disagreement. 
Then, \eqref{eq:hyperplane_CTP*} and $u_0 = \sum_{i\in N} \alpha_i u_i + \beta$ imply that $ \sum_{s\in S} p_i(s) u_i(f (s))   > u_i(x)$ 
for all $i\in N$ and $u_0 (x)> \sum_{s\in S} p_0(s) u_0(f(s))$ all $p_0\in P_0$. Therefore, $x\not\succsim_i f$ for all $i\in N$ but $x\succsim_0 f$, which is a contradiction to Common-Taste Pareto{\pstar}. 

We then prove the converse. Let $f, g\in\mathcal{F}$ be  such that they have no taste disagreement and $f\not\succsim_i g$ for all $i\in N$. 
By construction, each $u_i$ represents the same binary relation on $ \mathrm{conv}\, \qty(f(S)\cup g(S))$. 
Therefore, there exist a function $v: \mathrm{conv}\, \qty(f(S)\cup g(S)) \to \mathbb{R}$ and $(a, b) \in \mathbb{R}^n_{++}\times \mathbb{R}^n$ such that for all $i\in N$ and all $x\in \mathrm{conv}\, \qty(f(S)\cup g(S))$, $u_i (x) = a_i v(x) + b_i$. 

By  $f\not\succsim_i g$ for all $i\in N$, there exists $(p_1, \ldots, p_n ) \in \prod_{i\in N} P_i$ such that $\sum_{s\in S} p_i(s) u_i(g(s)) > \sum_{s\in S} p_i(s) u_i(f(s))$,  
which can be rewritten as for all $i\in N$, 
\begin{equation}
\label{eq:nece}
     \sum_{s\in S} p_i(s) v(g(s)) > \sum_{s\in S} p_i(s) v(f(s)). 
\end{equation}
By statement (ii), there exists $p_0\in P_0$ such that $p_0 = \sum_{i\in N}\gamma_i p_i$ for some $\gamma\in \Delta(N)$. 
Then, by \eqref{eq:nece}, 
\begin{align}
     \sum_{s\in S} p_0(s)  u_0(g(s))
     &=  \sum_{s\in S} \qty( \sum_{i\in N}\gamma_i p_i (s) ) \qty( \sum_{i\in N} \alpha_i ( a_i v(g(s)) +b_i ) +\beta) \\
     &
     >  \sum_{s\in S} \qty( \sum_{i\in N}\gamma_i p_i (s) ) \qty( \sum_{i\in N} \alpha_i ( a_i v(f(s)) +b_i ) +\beta) \\
     &> \sum_{s\in S} p_0(s) u_0 (f(s)), 
\end{align}
which implies $f\not\succsim_0 g$. 

\subsection{Proof of Proposition \ref{prop:ex}}

    First, we prove that (i) implies (ii). 
    Let $f,g\in \mathcal{F}$ be such that they have no taste disagreement and $f \succsim_i g$ for all $i\in N$. 
    Then, there exist a function $v: \mathrm{conv}\, \qty(f(S)\cup g(S)) \to \mathbb{R}$ and $(a, b) \in \mathbb{R}^n_{++}\times \mathbb{R}^n$ such that for all $i\in N$ and all $x\in \mathrm{conv}\, \qty(f(S)\cup g(S))$, $u_i (x) = a_i v(x) + b_i$. 
    By  $f \succsim_i g$ for all $i\in N$, we have that for all $p\in \bigcup_{i\in N}P_i$, 
    \begin{equation}
        \sum_{s\in S} p (s) v (f(s)) \geq \sum_{s\in S} p (s) v (g(s)),
    \end{equation}
    which implies  $\sum_{s\in S}p_j(s)u_i(f(s))\geq \sum_{s\in S}p_j(s)u_i(g(s))$ for all $i,j\in N$ and all $p_j\in P_j$. Therefore, Exchange Pareto is stronger than Common-Taste Pareto. 
    By applying Theorem 2 of \citet{danan2016robust},  statement (ii) holds.

    Then, we prove the converse. Suppose that for all $i,j\in N$ and all $p_j\in P_j$,
    \begin{equation}
    \label{eq:agree}
        \sum_{s\in S} p_j (s) u_i(f(s)) \geq \sum_{s\in S} p_j (s) u_i(g(s)). 
    \end{equation}
    Let $p_0 \in P_0$. 
    By  (ii), there exist $(p_1,\ldots, p_n) \in \prod_{i\in N} P_i$ and $\gamma\in \Delta(N)$ such that $p_0 = \sum_{i\in N} \gamma_i p_i$.
    Then, by \eqref{eq:agree}, 
    \begin{align*}
        \sum_{s\in S} p_0(s) u_0(f(s))
        &= \sum_{s\in S} \qty( \sum_{j\in N} \gamma_j p_j(s)) \qty(\sum_{i\in N} \alpha_i u_i (f(s))+ \beta  ) = \sum_{i,j\in N} \alpha_i \gamma_j \qty(\sum_{s\in S} p_j(s) u_i(f(s))) + \beta\\
        &\geq \sum_{i,j\in N} \alpha_i \gamma_j \qty(\sum_{s\in S} p_j(s) u_i(g(s))) + \beta = \sum_{s\in S} \qty( \sum_{j\in N} \gamma_j p_j(s)) \qty(\sum_{i\in N} \alpha_i u_i (g(s)) + \beta  )\\
        &= \sum_{s\in S} p_0(s) u_0(g(s)).  
    \end{align*}
    Since this holds for each $p_0\in P_0$, we have $f\succsim_0 g$.

\subsection{Proof of Proposition \ref{prop:exstar}}

    First, we prove that (i) implies (ii).   
    Let $f,g\in \mathcal{F}$ be such that they have no taste disagreement and $f \not\succsim_i g$ for all $i\in N$.
    Then, there exist a function $v: \mathrm{conv}\, \qty(f(S)\cup g(S)) \to \mathbb{R}$ and $(a, b) \in \mathbb{R}^n_{++}\times \mathbb{R}^n$ such that for all $i\in N$ and all $x\in \mathrm{conv}\, \qty(f(S)\cup g(S))$, $u_i (x) = a_i v(x) + b_i$. 
    For each $j\in N$, it follows from $f \not\succsim_j g$ that there exists $p_j\in P_j$ such that 
    \begin{equation}
        \sum_{s\in S} p_j (s) v (g(s)) > \sum_{s\in S} p_j (s) v (f(s)) ,
    \end{equation}
    which implies $\sum_{s\in S} p_j(s) u_i(g(s)) > \sum_{s\in S} p_j (s) u_i(f(s))$ for all $i\in N$. Therefore, Exchange Pareto{\pstar} is stronger than Common-Taste Pareto{\pstar}. 
    By applying Theorem \ref{thm:common}, statement (ii) holds.
    
    We then prove the converse.
    Suppose that  there exists $(p_1,\ldots, p_n) \in \prod_{j\in N} P_j$ such that  $\sum_{s\in S} p_j(s) u_i(g(s)) > \sum_{s\in S} p_j (s) u_i(f(s))$ for all $i,j\in N$. 
    By (ii), there exists $p_0 \in P_0$ such that $p_0 = \sum_{i\in N} \gamma_i p_i$ for some $\gamma\in \Delta(N)$. Then, since there exist $(\alpha, \beta)\in\qty(\mathbb{R}_+^n \setminus \{ \mathbf{0} \}) \times \mathbb{R}$ such that $u_0 = \alpha_i u_i+\beta$, we have 
    \begin{align}
        \sum_{s\in S} p_0(s) u_0(f(s))
        &= \sum_{s\in S} \qty( \sum_{j\in N} \gamma_j p_j(s)) \qty(\sum_{i\in N} \alpha_i u_i (f(s))+ \beta  ) = \sum_{i,j\in N} \alpha_i \gamma_j \qty(\sum_{s\in S} p_j(s) u_i(f(s))) + \beta\\
        &< \sum_{i,j\in N} \alpha_i \gamma_j \qty(\sum_{s\in S} p_j(s) u_i(g(s))) + \beta = \sum_{s\in S} \qty( \sum_{j\in N} \gamma_j p_j(s)) \qty(\sum_{i\in N} \alpha_i u_i (g(s)) + \beta  )\\
        &= \sum_{s\in S} p_0(s) u_0(g(s)), 
    \end{align}
    which implies $f\succsim_0 g$. 
    
\bibliographystyle{econ-aea}
\bibliography{reference}

\end{document}